\documentclass[10pt, conference, letterpaper]{IEEEtran}
\ifCLASSINFOpdf
\usepackage[pdftex]{graphicx}
\else
\usepackage[dvips]{graphicx}
\fi

\usepackage[pass,]{geometry}

\usepackage{amsmath,amsthm,amssymb,amsfonts}
\usepackage{epsfig}
\usepackage[tight]{subfigure}
\usepackage{graphicx}
\usepackage{cite}
\usepackage{times}
\usepackage{enumerate}
\usepackage{enumitem}
\usepackage{bm}
\usepackage{algorithmic}
\usepackage{algorithm}
\usepackage{relsize}
\usepackage[utf8]{inputenc}
\usepackage{cleveref}
\usepackage{url}
\usepackage{lipsum}
\usepackage{fancyhdr}
\usepackage{tikz}
\usetikzlibrary{calc}
\crefname{section}{§}{§§}
\Crefname{section}{§}{§§}

\newtheorem{theorem}{Theorem}
\newtheorem{lemma}{Lemma}
\newtheorem{definition}{Definition}

\newtheorem{corollary}{Corollary}
\newtheorem{claim}{Claim}

\IEEEoverridecommandlockouts
\begin{document}

\title{Coflow Scheduling in Input-Queued Switches: Optimal Delay Scaling and Algorithms}
\author{
\IEEEauthorblockN{Qingkai Liang and Eytan Modiano}
\IEEEauthorblockA{Laboratory for Information and Decision Systems\\Massachusetts Institute of Technology, Cambridge, MA}
\thanks{This work was supported by NSF Grants CNS-1116209, CNS-1617091, and by DARPA I2O and Raytheon BBN Technologies under Contract No. HROO l l-l 5-C-0097.}
}

\maketitle

\begin{tikzpicture}[remember picture, overlay]
\node at ($(current page.north) + (-3in,-0.5in)$) {Technical Report};
\end{tikzpicture}

\begin{abstract}
A coflow is a collection of parallel flows belonging to the same job. It has the all-or-nothing property: a coflow is not complete until the completion of all its constituent flows. In this paper, we focus on optimizing \emph{coflow-level delay}, i.e., the time to complete all the flows in a coflow, in the context of an $N\times N$ input-queued switch.  In particular, we develop a throughput-optimal scheduling policy that achieves the best scaling of coflow-level delay as $N\rightarrow\infty$. We first derive lower bounds on the coflow-level delay that can be achieved by any scheduling policy. It is observed that these lower bounds critically depend on the variability of flow sizes. Then we analyze the coflow-level performance of some existing coflow-agnostic scheduling policies and show that none of them achieves provably optimal performance with respect to coflow-level delay. Finally, we propose the Coflow-Aware Batching (CAB) policy which achieves the optimal scaling of coflow-level delay under some mild assumptions.
\end{abstract}

\section{Introduction}
Modern cluster computing frameworks, such as MapReduce \cite{mapreduce} and Spark \cite{spark}, have been widely used in large-scale data processing and analytics. Despite the differences among these frameworks, they share a common feature:  the computation is divided into multiple stages and a collection of parallel data flows need to be transferred between groups of machines in successive computation stages. Often the next computation stage cannot start until the completion of all of these flow transfers. For example, during the shuffle phase in MapReduce, any reducer node cannot start the next reduce phase until it receives intermediate results from all of the mapper nodes. As a result, the response time of the entire computing job critically depends on the completion time of these intermediate flows. In some applications, these intermediate flow transfers can account for more than 50\% of job completion time \cite{orchestra}.

The recently proposed \emph{coflow} abstraction \cite{ion-coflow2} represents such a collection of parallel data flows between two successive computation stages of a job, which exposes application-level requirements to the network. It builds upon the \emph{all-or-nothing} property observed in many applications \cite{coflow-no-prior}: a coflow is not complete until the completion of all its constituent flows. As a result, one of the most important metrics in this context is \emph{coflow-level delay} (also referred to as \emph{coflow completion time} in some literature \cite{coflow-ion, coflow-no-prior, rapier}), i.e., the time to complete all of the flows in a coflow. To improve the overall response time of a job, it is crucial to schedule flow transfers in a way that the coflow-level delay can be reduced. Unfortunately, researchers have largely overlooked such application-level requirements and there has been little work on coflow-level delay optimization.

In this paper, we study coflow-level delay in the context of an $N\times N$ input-queued switch with stochastic coflow arrivals. In each slot, a random number of coflows, each of them consisting of multiple parallel flows, arrive to the input-queued switch where each input/output port can process at most one packet per slot. Such an input-queued switch model is a simple yet practical abstraction for data centers with full bisection bandwidth, where $N$ represents the number of servers. Due to the large scale of modern data centers, we are motivated to study the scaling of coflow-level delay as $N\rightarrow\infty$. In particular, we are interested in the optimal scaling of coflow-level delay, i.e., the scaling under an ``optimal" scheduling policy\footnote{An ``optimal" policy should be throughput-optimal, i.e., stabilize the system whenever the load $\rho<1$, and should achieve the minimum coflow-level delay among all throughput-optimal policies.}. As far as we know, this is the first paper to present coflow-level delay analysis in a large-scale stochastic system.

The contributions of this paper are summarized as follows.

\begin{itemize}[leftmargin=0.3cm,itemsep=1mm,topsep=0.5mm]
\item{We derive lower bounds on the expected coflow-level delay that can  be achieved by any scheduling policy in an $N\times N$ input-queued switch. These lower bounds critically depend on the variability of flow sizes. In particular, it is shown that if flow sizes are light-tailed, no scheduling policy can achieve an average coflow-level delay better than $O(\log N)$.}

\item{We analyze the coflow-level performance of several coflow-agnostic scheduling policies, where coflow-level information is not leveraged. It is shown that none of these scheduling policies achieves a \emph{provably optimal} scaling of coflow-level delay. For example, the expected coflow-level delay achieved by randomized scheduling is $O(N\log N)$ if coflow sizes are light-tailed, far above the $O(\log N)$ lower bound.}

\item{We show that $O(\log N)$ is the optimal scaling of average coflow-level delay when flow sizes are light-tailed and coflow arrivals are Poisson. This optimal scaling is achievable with our Coflow-Aware Batching (CAB) policy.}
\end{itemize}

The organization of this paper is as follows. We first review related work in Section \ref{sec:related} and introduce several mathematical tools in Section \ref{sec:pre}. The system model is introduced in Section \ref{sec:model}. In Section \ref{sec:lower}, we demonstrate fundamental lower bounds on the expected coflow-level delay that can be achieved by any scheduling policy. In Section \ref{sec:agnostic}, we analyze the coflow-level performance of some coflow-agnostic scheduling policies. In Section \ref{sec:cab}, we propose the Coflow-Aware Batching (CAB) policy and show that it achieves the optimal coflow-level delay scaling under some conditions. Finally, simulation results and conclusions are given in Sections \ref{sec:simulation} and \ref{sec:conclusion}, respectively.

\section{Related Work}\label{sec:related}
We start with a brief literature review on coflow-level optimization and delay scaling in input-queued switches.

\vspace{1mm}

\noindent \textbf{Coflow-level Optimization.} The notion of coflows was first proposed by Chowdhy and Stoica \cite{ion-coflow2} to convey job-specific requirements such as minimizing coflow-level delay or meeting some job completion deadline. Unfortunately, coflow-level optimization is often computationally intractable. For example, it was shown in \cite{coflow-ion} that minimizing the average coflow-level delay is NP-hard. As a result, many heuristic scheduling principles were developed to improve coflow-level delay. In \cite{baraat}, a decentralized coflow scheduling framework was proposed to give priority to coflows according to a variation of the FIFO principle, which performs well for light-tailed flow sizes. In \cite{coflow-ion}, the $\mathsf{Varys}$ scheme improves the performance of \cite{baraat} by leveraging more sophisticated heuristics such as ``smallest-bottleneck-first" and ``smallest-total-size-first", where global information about coflows is required. The D-CAS scheme in \cite{D-CAS} exploits a similar ``shortest-remaining-time-first" principle for coflow scheduling. The $\mathsf{Aalo}$ framework \cite{coflow-no-prior} generalizes the classic least-attained service (LAS) discipline \cite{las} to coflow scheduling; such a scheme does not require prior knowledge about coflows. Zhong \emph{et al.} \cite{approx-coflow} develop an approximation algorithm to minimize the average coflow-level delay in data centers.  Additionally, Chen \emph{et al.} \cite{rapier} jointly consider coflow routing and scheduling in data centers.  Despite these efforts towards coflow-level optimization, most prior works do not provide any analytical performance guarantee, and there is a lack of fundamental understanding of coflow-level scheduling, especially in the context of large-scale stochastic systems.

\vspace{1mm}

\noindent \textbf{Optimal Delay Scaling in Input-Queued Switches.} The optimal (\emph{packet-level}) delay scaling in input-queued switches (i.e., the delay scaling under an optimal scheduling policy) has been an important area of research for more than a decade. The randomized scheduling policy \cite{ct} (based on Birkhoff-Von Neumann decomposition) achieves an average packet delay of $O(N)$. The well-known Max-Weight Matching (MWM) \cite{nick-max-weight} policy and various approximate MWM algorithms \cite{max-weight-delay-scaling, shah-delay-max} are shown to have an average packet delay no greater than $O(N)$, although it is conjectured that this bound is not tight  for a wide range of traffic patterns \cite{max-weight-delay-scaling}. Recently, Maguluri \emph{et al.} \cite{srikant-max1, srikant-max2} show that MWM can achieve the optimal $O(1)$ packet-level delay in the \emph{heavy-traffic regime}.  Neely \emph{et al.} \cite{Modiano-batching} propose a batching scheme that achieves an average packet delay of $O(\log N)$; this is the best known result for packet-level delay scaling as $N\rightarrow\infty$ \emph{under general traffic conditions}. Zhong \emph{et al.} \cite{optimal-scaling} consider the joint scaling of queue length as $N\rightarrow\infty$ and $\rho\rightarrow 1$. They propose a policy that gives an upper bound of $O(1\slash(1-\rho)+N^2)$; this joint scaling is shown to be ``optimal" in the \emph{heavy-traffic regime} where $\rho=1-O(\frac{1}{N^2})$. However, to the best of our knowledge, the optimal scaling of packet-level delay under a general traffic condition is still an open problem in input-queued switches. By comparison, the optimal scaling of \emph{coflow-level delay}, which is an upper bound for packet-level delay, has not been studied before. In this paper, we make the first attempt in deriving the optimal coflow-level delay scaling as $N\rightarrow\infty$.
\section{Preliminaries}\label{sec:pre}
In this section, we briefly introduce some common notations and useful mathematical tools that facilitates our subsequent analysis.
\subsection{Notation}
Define $[N]=\{1,\cdots,N\}$. We reserve bold letters for matrices. For example, $\mathbf{X}=(X_{ij})_{N\times N}$ denotes an $N\times N$ matrix with the $(i,j)$-th element being $X_{ij}$. For simplicity of notation, we may drop the subscript ``$N\times N$" if the context is clear.

We use the traditional asymptotic notations. Let $f$ and $g$ be two functions defined on some subset of real numbers. Then $f(n)=O(g(n))$ if there exists some positive constant $C$ and a real number $n_0$ such that $|f(n)|\le C|g(n)|$ for all $n\ge n_0$;  similarly,  $f(n)=\Omega(g(n))$ if there exists some positive constant $C$ and a real number $n_0$ such that $f(n)\ge Cg(n)$ for all $n\ge n_0$; finally, $f(n)=\Theta(g(n))$ if $f(n)=O(g(n))$ and $f(n)=\Omega(g(n))$.
\subsection{Mathematical Tools}\label{sec:math}
\vspace{1mm}
\noindent \textbf{(1). Stochastic Dominance.} We consider the first-order stochastic dominance whose definition is as follows.
\begin{definition}[Stochastic Dominance \cite{stochastic-dominance-1}]
Consider \\two random variables $X_1$ and $X_2$. Then $X_1$ stochastically dominates $X_2$ if $\mathbb{P}[X_1\ge x]\ge \mathbb{P}[X_2\ge x]$ for all $x\in\mathbb{R}$.
\end{definition}
Intuitively, stochastic dominance defines the ``inequality relationship" between two random variables in the probabilistic sense. If $X_1$ stochastically dominates $X_2$, then $X_1$ has an equal or higher probability of taking on a large values than $X_2$. Two useful properties of stochastic dominance are as follows \cite{stochastic-dominance-1}.
\begin{itemize}
\item[(P1)] Consider two non-negative random variables $X_1$ and $X_2$. If $X_1$ stochastically dominates $X_2$, then $\mathbb{E}[X^n_1]\ge \mathbb{E}[X^n_2]$ for all $n\in\mathbb{Z}^+$.

\item[(P2)] Suppose $\{X_1,\cdots,X_N\}$ is a set of independent random variables and $\{Y_1,\cdots,Y_N\}$ is another set of independent random variables. If $X_i$ stochastically dominates $Y_i$ for all $i\in[N]$, then $\max_i X_i$ also stochastically dominates $\max_i Y_i$.
\end{itemize}
The above properties will be helpful when establishing inequalities among expectations (and higher moments) of random variables.

\vspace{1mm}

\noindent \textbf{(2). Associated Random Variables.} The association of random variables is a stronger notion of positive correlation. The formal definition is as follows.

\begin{definition}[Associated Random Variables \cite{association-random}]
A collection of random variables $X_1,\cdots,X_n$ are said to be associated if $\text{Cov}(f(\mathbf{X}),g(\mathbf{X}))\ge 0$ for all pairs of non-decreasing functions $f$ and $g$.
\end{definition}
Intuitively, if a collection of random variables are associated, they are usually positively correlated (at least independent). In other words, if $X_1$ takes on a large value, then $X_2,\cdots,X_n$ are also very likely to take on large values. The followings are some useful properties of associated random variables (see Appendix A of \cite{fork-join-mm1}).

\begin{itemize}
\item[(P1)] Independent random variables are associated (trivial case).

\item[(P2)] Non-decreasing functions of associated random variables are also associated.

\item[(P3)] If two sets of associated random variables are independent of one another, then their union is a set of associated random variables.

\item[(P4)] If $X_1,\cdots, X_n$ are associated, then $\max_i X_i$ is stochastically dominated by $\max_i X'_i$ where $X'_i$'s are \emph{independent} random variables identically distributed as $X_i$'s.
\end{itemize}
A particularly important property is (P4) which relates the maximum of a set of (possibly dependent) random variables to the maximum of a set of independent random variables.
\section{System Model}\label{sec:model}
\subsection{Network Model}
We consider an $N\times N$ input-queued switch with $N$ input ports and $N$ output ports. The system operates in slotted time, and the slot length is normalized to one unit of time. In each slot, each input can transfer at most one packet and each output can receive at most one packet (this is referred to as ``crossbar constraints"). Such an input-queued switch model is simple yet very useful in modeling many practical networked systems. For example, data centers with full bisection bandwidth can be abstracted out as a giant input-queued switch interconnecting different machines. Note that each input port may have packets destined for different output ports, which can be represented as \emph{Virtual Output Queues} (VOQ). There are a total of $N^2$ virtual output queues, indexed by $(i,j)$ for $i,j\in[N]$, where $[N]\triangleq\{1,\cdots,N\}$ and queue $(i,j)$ holds packets from  input $i$ to output $j$. Figure \ref{fig:switch} shows a $2\times 2$ input-queued switch with four virtual output queues.

The schedule of packet transmissions in slot $t$ can be represented by an $N\times N$ matrix $\mathbf{S}(t)=(S_{ij}(t))$ where $S_{ij}(t)=1$ if the connection between input $i$ and output $j$ is activated. A feasible schedule $\mathbf{S}(t)$ is one that satisfies the crossbar constraints, i.e., $\mathbf{S}(t)$ must be a binary matrix where there is at most one ``1" in each row and each column.

\begin{figure}[]
\begin{center}
\includegraphics[width=3in]{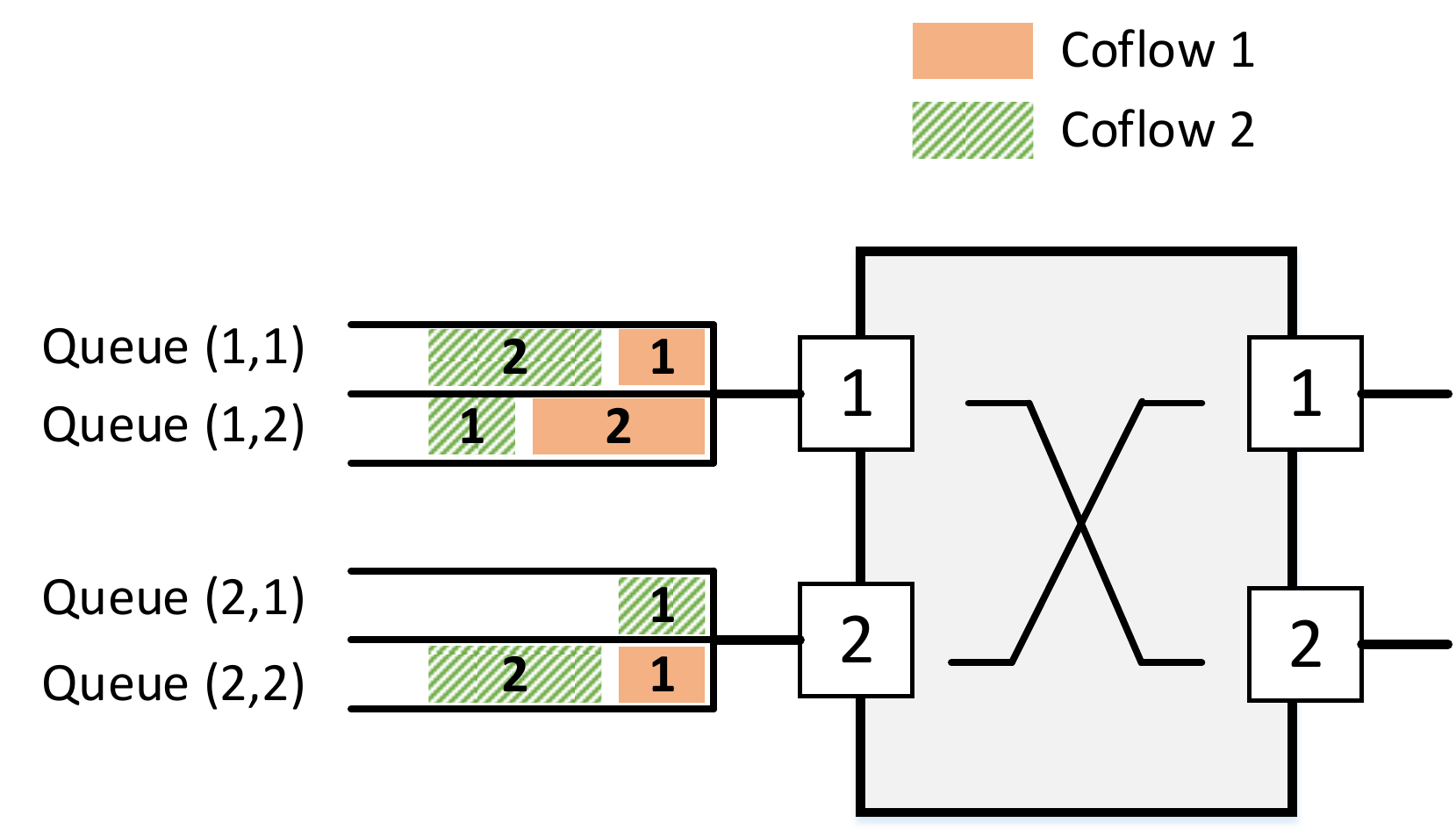}
\caption{A $2\times 2$ input-queued switch with two coflow arrivals.}
\label{fig:switch}
\end{center}
\end{figure}
\subsection{Coflow Abstraction}\label{sec:coflow-abs}
A coflow is a collection of parallel data flows belonging to the same job. It has the all-or-nothing property: a coflow is not complete until the completion of all its constituent flows. Coflows are a useful abstraction for many communication patterns in data-intensive computing applications such as MapReduce (see \cite{ion-coflow2} for  applications of coflows). Note that the traditional point-to-point communication is a special case of coflows (i.e., a coflow with a single flow).

Formally, we represent each coflow by a random traffic matrix $\mathbf{X}=(X_{ij})$ where $X_{ij}$ is the number of packets in this coflow that need to be transmitted from input $i$ to output $j$. Note that each coflow may contain many small flows from input $i$ to output $j$, that are aggregated into a single batch $X_{ij}$ for ease of exposition. In the following, $X_{ij}$ will be referred to as the ``\textbf{batch size}" or ``\textbf{flow size}" from input $i$ to output $j$. We assume that all the packets in a coflow are released simultaneously upon the arrival of this coflow. Figure \ref{fig:switch} illustrates two coflow arrivals whose traffic matrices are
$
\mathbf{X}_1=
\begin{bmatrix}
1 & 2\\
0 & 1
\end{bmatrix}
$ and $
\mathbf{X}_2=
\begin{bmatrix}
2 & 1\\
1 & 2
\end{bmatrix}.
$
Let $\mathbb{E}[X_{ij}]=\beta_{ij}$ and assume that coflows arrive to the system with rate $\lambda$. Then the arrivals of packets to queue $(i,j)$ is a \emph{batch arrival process} with rate $\lambda$ and mean batch size $\beta_{ij}$.  Also define $\beta\triangleq \max\Big(\max_i \sum_j \beta_{ij}, \max_j \sum_i \beta_{ij}\Big)$. In our analysis, we also make the following assumptions:
\begin{itemize}
\item[(1)]{The arrival times and the batch sizes of different coflows are independent.}
%
%

\item[(2)]{$X_{ij}$'s are independent random variables.}

\item[(3)]{If the $n$-th moment of $X_{ij}$ is finite, we assume $\mathbb{E}[(\sum_j X_{ij})^n]=O(1)$ for all $i\in[N]$ and $\mathbb{E}[(\sum_i X_{ij})^n]=O(1)$ for all $j\in[N]$ as $N\rightarrow\infty$.}


\item[(4)]{The sub-critical condition holds: $\rho=\lambda\beta<1$.}
\end{itemize}

In this paper, we focus on optimizing \emph{coflow-level delay}, i.e., \emph{the time between the arrival of a coflow until all the packets associated with this coflow are transmitted}. In particular, we are interested in the scaling of coflow-level delay in a large-scale system, as $N\rightarrow\infty$. Our objective is to find a scheduling policy that achieves the best dependence of coflow-level delay on $N$ while stabilizing the system whenever $\rho<1$.
\section{Lower Bounds on Coflow-level Delay}\label{sec:lower}
Before we investigate any specific scheduling policy, it is useful to study the fundamental scaling properties of coflow-level delay as $N\rightarrow\infty$. In this section, we develop lower bounds on coflow-level delay in input-queued switches. These lower bounds serve as the baselines when we evaluate the coflow-level performance of a scheduling policy. We first introduce the notion of \emph{clearance time}.
\begin{definition}[Clearance Time]
The clearance time of a coflow $\mathbf{X}=(X_{ij})$ is
\begin{equation}\label{eq:clear}
\tau(\mathbf{X})=\max \Big (\max_i \sum_{j} X_{ij}, \max_j \sum_{i} X_{ij}\Big).
\end{equation}
\end{definition}
\noindent Clearly, $\tau(\mathbf{X})$ is the maximum number of packets in a row  or a column of $\mathbf{X}$. Since each input/output port can process at most one packet per slot, the minimum time to clear all the packets in $\mathbf{X}$ must be no smaller than $\tau(\mathbf{X})$. In fact, $\mathbf{X}$ can be cleared in exactly $\tau(\mathbf{X})$ slots by using the \emph{optimal clearance algorithm} described in \cite{clearance-policy}. 
As a result,  $\tau(\mathbf{X})$ is the minimum time needed to transmit all the packets in a coflow $\mathbf{X}$. In the rest of this section, we investigate the scaling of clearance time as $N\rightarrow\infty$ and its relationship to coflow-level delay.

Depending on the distributions of  $X_{ij}$'s (the flow sizes), the scaling of clearance time exhibits different behaviors. First, we consider the general case where the flow size distribution is arbitrary (as long as $\mathbb{E}[X_{ij}]<\infty$ for all $i,j\in [N]$).
\begin{lemma}\label{thm:ct-general}
For a coflow $\mathbf{X}=(X_{ij})$ with $\mathbb{E}[X_{ij}]<\infty$, the expected clearance time is $\mathbb{E}[\tau(\mathbf{X})]=O(N)$. Moreover, there exist distributions of $X_{ij}$'s such that $\mathbb{E}[\tau(\mathbf{X})]=\Omega(N^{\frac{1}{1+\epsilon}})$ for any $\epsilon>0$.
\end{lemma}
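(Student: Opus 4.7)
The plan is to handle the two directions separately. The upper bound is a straightforward expectation-of-max argument, while the lower bound requires constructing a heavy-tailed example and invoking extreme-value asymptotics.

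For the $O(N)$ upper bound, I would start from the trivial inequality $\max_i Y_i \le \sum_i Y_i$ applied to the row sums $Y_i = \sum_j X_{ij}$, and similarly to the column sums. Taking expectations and using the definition of $\beta$ in Section~\ref{sec:coflow-abs} (which guarantees $\sum_j \beta_{ij}\le \beta$ for each $i$ and $\sum_i \beta_{ij}\le \beta$ for each $j$), one obtains
\begin{equation*}
\mathbb{E}\bigl[\max_i \textstyle\sum_{j} X_{ij}\bigr]\le \sum_i \sum_j \beta_{ij}\le N\beta,
\end{equation*}
and analogously for the column maximum. Hence $\mathbb{E}[\tau(\mathbf{X})] \le 2N\beta = O(N)$. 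No finer tail information is needed here.

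For the lower bound, the plan is to exhibit a distribution satisfying all four modeling assumptions yet forcing $\mathbb{E}[\tau(\mathbf{X})]$ to be large. I would take a purely diagonal coflow: set $X_{ij}=0$ for $i\ne j$ and let $X_{11},\dots,X_{NN}$ be i.i.d.\ with a Pareto-type tail $\mathbb{P}[X_{ii}\ge x]= c\,x^{-(1+\epsilon)}$ for $x\ge x_0$. Only the first moment is finite, so assumption (3) only needs to be checked for $n=1$, and it holds because each row/column contains a single entry. Then $\tau(\mathbf{X}) = \max_{i}X_{ii}$, and the question reduces to the expected maximum of $N$ i.i.d.\ heavy-tailed variables.

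The key estimate is to choose $a_N$ so that $N\,\mathbb{P}[X\ge a_N]=1$, i.e.\ $a_N = \Theta(N^{1/(1+\epsilon)})$. Then
\begin{equation*}
\mathbb{P}\bigl[\max_i X_{ii}\ge a_N\bigr] = 1-\bigl(1-\tfrac{1}{N}\bigr)^N \xrightarrow{N\to\infty} 1-e^{-1},
\end{equation*}
so by Markov's inequality in reverse, $\mathbb{E}[\max_i X_{ii}]\ge a_N\,(1-e^{-1})$ for all $N$ large enough, giving $\mathbb{E}[\tau(\mathbf{X})] = \Omega(N^{1/(1+\epsilon)})$. The main obstacle I expect is book-keeping, namely making sure the constructed example still respects every item in the model: a diagonal coflow trivially satisfies independence and the bounded-row/column-moment condition at the relevant order, so this should go through cleanly, but I would state the construction carefully so the reader can verify compatibility with assumptions (1)--(3) and with $\rho<1$ for any desired coflow arrival rate.
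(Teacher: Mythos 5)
Your proposal is correct and follows essentially the same route as the paper: the same trivial sum bound for the $O(N)$ upper bound, and the same diagonal coflow with power-law ($\text{tail} \sim x^{-(1+\epsilon)}$) entries for the lower bound, with your single-threshold bound $\mathbb{E}[\max_i X_{ii}]\ge a_N\,\mathbb{P}[\max_i X_{ii}\ge a_N]$ at $a_N=\Theta(N^{1/(1+\epsilon)})$ being exactly what the paper's truncated tail-sum reduces to after it lower-bounds each of its $N_\epsilon$ terms by the last one. Your explicit check of modeling assumption (3) for the diagonal construction is a small bonus the paper omits.
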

\begin{proof}
The $O(N)$ upper bound is nearly trivial. It is clear that
\[
\begin{split}
\mathbb{E}[\tau(\mathbf{X})]&\le \mathbb{E}[\sum_{i,j}X_{ij}]\le N\beta =O(N).
\end{split}
\]

To prove the lower bound, we find some  distributions of $X_{ij}$'s such that $\mathbb{E}[\tau(\mathbf{X})]=\Omega(N^{\frac{1}{1+\epsilon}})$ for any $\epsilon>0$. Consider the scenario where $\mathbf{X}$ is a diagonal matrix: $X_{ij}=0$ with probability 1 for $i\ne j$ and $X_{ii}$ has the power law for all $i\in[N]$, i.e.,
\[
\mathbb{P}[X_{ii}\ge k]=\frac{1}{k^{1+\epsilon}},~k=1,2,\cdots,
\]
where $\epsilon>0$. Note that $\mathbb{E}[X_{ii}]=\frac{1+\epsilon}{\epsilon}$ but it can be easily scaled or shifted to have an arbitrary expectation. By simple calculation on the order statistics, we have
\[
\begin{split}
\mathbb{E}[\tau(\mathbf{X})]&\ge \mathbb{E}[\max_i \sum_j X_{ij}]\\
&=\mathbb{E}[\max_i X_{ii}]\\
&=\sum_{k=1}^{\infty} \Big[1-(1-\frac{1}{k^{1+\epsilon}})^N\Big]\\
&\ge \sum_{k=1}^{N_{\epsilon}}  \Big[1-\Big(1-\frac{1}{N^{1+\epsilon}_{\epsilon}}\Big)^N\Big]\\
& \ge N_{\epsilon}  \Big[1-\Big(1-\frac{1}{N}\Big)^N\Big],
\end{split}
\]
where $N_{\epsilon}=\lfloor N^{\frac{1}{1+\epsilon}}\rfloor$. Since $1-\Big(1-\frac{1}{N}\Big)^N \xrightarrow{N\rightarrow\infty}1-e^{-1}$, we can conclude that $N_{\epsilon}  \Big[1-\Big(1-\frac{1}{N}\Big)^N\Big]=\Theta(N_{\epsilon})=\Theta(N^{\frac{1}{1+\epsilon}})$ as $N\rightarrow\infty$. As a result, there exist (heavy-tailed) distributions of $X_{ij}$'s such that $\mathbb{E}[\tau(\mathbf{X})]=\Omega(N^{\frac{1}{1+\epsilon}})$ for any $\epsilon>0$.
\end{proof}

If $X_{ij}$ also has a finite variance, we can obtain a better scaling behavior of clearance time as $N\rightarrow\infty$. In this case, we  assume $\text{Var}(\sum_{j}X_{ij})\le \sigma^2$ for all $i\in[N]$ and $\text{Var}(\sum_{i}X_{ij})\le \sigma^2$ for all $j\in[N]$, where $\sigma^2$ is a constant independent of $N$.
\begin{lemma}\label{thm:ct-second}
For a coflow $\mathbf{X}=(X_{ij})$ with $\text{Var}(X_{ij})<\infty$ for all $i,j\in[N]$, the expected clearance time is $\mathbb{E}[\tau(\mathbf{X})]=O(\sqrt{N})$. Moreover, there exist distributions of $X_{ij}$'s such that $\mathbb{E}[\tau(\mathbf{X})]=\Omega(N^{\frac{1}{2+\epsilon}})$ for any $\epsilon>0$.
\end{lemma}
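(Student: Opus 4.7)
My approach will mirror the structure of Lemma \ref{thm:ct-general}: prove the $O(\sqrt{N})$ upper bound by a second-moment argument on row/column sums, then exhibit a heavier-tailed diagonal construction to achieve the claimed lower bound.

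For the upper bound, let $R_i \triangleq \sum_j X_{ij}$ and $C_j \triangleq \sum_i X_{ij}$. By the hypothesis applied with $n=2$, $\mathbb{E}[R_i], \mathbb{E}[C_j] \leq \beta$ and $\mathrm{Var}(R_i), \mathrm{Var}(C_j) \leq \sigma^2$, all independent of $N$. Writing $\tau(\mathbf{X}) = \max(\max_i R_i, \max_j C_j)$ and using $\max_i R_i \leq \beta + \max_i |R_i - \mathbb{E}[R_i]|$, I would apply Jensen's inequality and then the trivial bound $\max_i Y_i^2 \leq \sum_i Y_i^2$:
\[
\mathbb{E}\bigl[\max_i |R_i - \mathbb{E}[R_i]|\bigr] \leq \sqrt{\mathbb{E}\bigl[\max_i (R_i - \mathbb{E}[R_i])^2\bigr]} \leq \sqrt{\sum_i \mathrm{Var}(R_i)} \leq \sigma\sqrt{N}.
\]
The same bound applies to the column maximum, so $\mathbb{E}[\tau(\mathbf{X})] \leq 2\beta + 2\sigma\sqrt{N} = O(\sqrt{N})$. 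Note that independence is not even required for this step, only the per-row/per-column variance bound.

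For the lower bound, I would reuse the diagonal construction from Lemma \ref{thm:ct-general} but with a tail exponent chosen so that the variance is finite: set $X_{ij} = 0$ a.s.\ for $i \neq j$ and let $X_{ii}$ have the distribution
\[
\mathbb{P}[X_{ii} \geq k] = \frac{1}{k^{2+\epsilon}}, \qquad k = 1,2,\ldots
\]
Since $\mathbb{E}[X_{ii}^2] = \sum_{k\geq 1}(2k-1)/k^{2+\epsilon}$ converges, the variance assumption is satisfied (and the mean may be scaled/shifted as in the previous proof). Then repeating the order-statistics computation verbatim with $N_\epsilon \triangleq \lfloor N^{1/(2+\epsilon)}\rfloor$ gives
\[
\mathbb{E}[\tau(\mathbf{X})] \geq \mathbb{E}[\max_i X_{ii}] = \sum_{k=1}^{\infty}\Bigl[1 - \bigl(1 - \tfrac{1}{k^{2+\epsilon}}\bigr)^N\Bigr] \geq N_\epsilon \Bigl[1 - \bigl(1 - \tfrac{1}{N}\bigr)^N\Bigr] = \Theta(N^{1/(2+\epsilon)}).
\]

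I do not expect a serious obstacle: both halves are direct second-moment variants of Lemma \ref{thm:ct-general}. The only subtle point is matching the finite-variance hypothesis to the tail exponent in the lower-bound construction, which is exactly why $2+\epsilon$ (rather than $1+\epsilon$) is the critical threshold; this parallels how assumption (3) of the system model controls the $n$-th moment of row and column sums.
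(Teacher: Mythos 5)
Your proof is correct and follows essentially the same route as the paper: the upper bound is the standard max-of-$N$-variables second-moment bound (the paper cites it as Devroye's inequality $\mathbb{E}[\max_i Y_i]\le \max_i\mathbb{E}[Y_i]+\sqrt{n}\max_i\sqrt{\mathrm{Var}(Y_i)}$, which you simply re-derive inline via Jensen and $\max_i Y_i^2\le\sum_i Y_i^2$), and the lower bound is exactly the paper's intended construction, namely the diagonal power-law matrix of Lemma \ref{thm:ct-general} with tail exponent $2+\epsilon$ so that the variance is finite.
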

\begin{proof}
Devroye \cite{order-root} shows that if $Y_1,\cdots,Y_n$ are (possibly dependent) random variables with finite means and finite variances, then
\[
\mathbb{E}[\max_i Y_i]\le \max_i \mathbb{E}[Y_i]+\sqrt{n}\max_{i}\sqrt{\text{Var}(Y_i)}.
\]
If $\text{Var}(X_{ij})<\infty$ for all $i,j\in [N]$, it follows that
\[
\begin{split}
\mathbb{E}[\max_i \sum_{j} X_{ij}]&\le  \max_i \mathbb{E}[\sum_{j} X_{ij}]+\sqrt{N}\max_{i}\sqrt{\text{Var}(\sum_{j} X_{ij})}\\
&\le \beta+\sqrt{N}\sigma.
\end{split}
\]
Similarly, it can be shown that $\mathbb{E}[\max_j \sum_{i} X_{ij}]\le \beta+\sqrt{N}\sigma$. As a result, we have
\[
\mathbb{E}[\tau(\mathbf{X})]\le \mathbb{E}[\max_i \sum_{j} X_{ij}]+\mathbb{E}[\max_j \sum_{i} X_{ij}]\le 2\beta+2\sqrt{N}\sigma.
\]
Consequently, $\mathbb{E}[\tau(\mathbf{X})]=O(\sqrt{N})$. The $\Omega(N^{\frac{1}{2+\epsilon}})$ lower bound can be proved in a similar way to Lemma \ref{thm:ct-general} with the power being $2+\epsilon$ instead of $1+\epsilon$ such that the variance is finite.
\end{proof}

Furthermore, if $X_{ij}$'s have light-tailed\footnote{In this paper, a light-tailed distribution is the one with a finite Moment Generating Function in the neighborhood of 0. In other words, it has an exponentially decreasing tail.} distributions, the scaling of clearance time is logarithmic.
\begin{lemma}\label{thm:ct-light}
For a coflow $\mathbf{X}=(X_{ij})$ with light-tailed $X_{ij}$'s, the expected clearance time is $\mathbb{E}[\tau(\mathbf{X})]=O(\log N)$. Moreover, there exist distributions of $X_{ij}$'s such that this bound is tight.
\end{lemma}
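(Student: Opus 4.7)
The proof splits into an $O(\log N)$ upper bound and a matching example. For the upper bound I would use the obvious decomposition
\[
\tau(\mathbf{X}) \le \max_i R_i + \max_j C_j, \qquad R_i = \sum_j X_{ij}, \quad C_j = \sum_i X_{ij},
\]
and show that each maximum has expectation $O(\log N)$; by symmetry focus on the rows. By assumption (2) the $R_i$'s are mutually independent, and the key ingredient is that each $R_i$ is \emph{uniformly} light-tailed in $N$: there exist constants $\theta, C > 0$ not depending on $N$ such that $\mathbb{E}[e^{\theta R_i}] \le C$. Granted this, the standard log-MGF bound gives
\[
\mathbb{E}[\max_i R_i] \;\le\; \frac{1}{\theta}\,\log \mathbb{E}\Bigl[\sum_i e^{\theta R_i}\Bigr] \;\le\; \frac{\log(NC)}{\theta} \;=\; O(\log N),
\]
and the argument for $C_j$ is identical.

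To establish the uniform light-tailedness of $R_i$ I would exploit independence to factor $\mathbb{E}[e^{\theta R_i}] = \prod_j \mathbb{E}[e^{\theta X_{ij}}]$, take logs, and apply $\log(1+u) \le u$ to get $\log\mathbb{E}[e^{\theta R_i}] \le \sum_j (\mathbb{E}[e^{\theta X_{ij}}]-1)$. For $\theta$ sufficiently small the light-tailed hypothesis lets me Taylor-expand each factor as $\mathbb{E}[e^{\theta X_{ij}}]-1 \le \theta\beta_{ij} + c\,\theta^2\,\mathbb{E}[X_{ij}^2]$ for a universal constant $c$. Summing over $j$ and using $\sum_j\beta_{ij}\le\beta$ together with the bound $\sum_j\mathbb{E}[X_{ij}^2]=O(1)$ (which follows from assumption~(3) applied to $\mathbb{E}[R_i^2]$) yields a uniform $O(1)$ bound on $\log\mathbb{E}[e^{\theta R_i}]$, with $\theta$ chosen small but independent of $N$. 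This is the technical core and the main obstacle: a product of $N$ moment generating functions could easily blow up, and the whole estimate hinges on the bounded per-row rate together with the uniformly finite second moment of $R_i$ supplied by assumption (3).

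For the matching lower bound I would use the diagonal construction $X_{ij} = 0$ for $i\ne j$ with $X_{ii}$ i.i.d.\ from a discrete light-tailed distribution, say geometric with parameter $p\in(0,1)$. Then $\tau(\mathbf{X}) = \max_{i\in[N]} X_{ii}$, and a classical order-statistics computation gives $\mathbb{E}[\max_i X_{ii}] = \Theta(\log N)$ (for the geometric case, explicitly $\mathbb{E}[\max_i X_{ii}] \sim \log N / \log(1/(1-p))$). Assumption (3) is trivially satisfied because every row and every column has at most one nonzero entry of bounded moments, showing that the $O(\log N)$ scaling is achieved and therefore tight.
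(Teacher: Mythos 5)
Your proof is correct and reaches the same conclusion, but the upper bound goes by a genuinely different and more elementary route than the paper's. The paper reduces the problem to a general lemma on maxima of independent light-tailed variables (Lemma~\ref{lm:asy-order-light}): it Chernoff-bounds each row-sum tail, stochastically dominates each row sum by a shifted exponential, and then invokes the Lai--Robbins order-statistics asymptotics $\mathbb{E}[\max_i Y'_i]=m_N(1+O(1))$ with $m_N=\gamma^{-1}\log N+\mu$. You instead use the one-line estimate $\mathbb{E}[\max_i R_i]\le\theta^{-1}\log\sum_i\mathbb{E}[e^{\theta R_i}]$, which needs no order-statistics machinery (and not even independence across rows), and you put the real work where it belongs: showing $\mathbb{E}[e^{\theta R_i}]$ is bounded uniformly in $N$ by factorizing the MGF over the independent entries. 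This is a genuine improvement in transparency --- the paper simply asserts that the row sums $\sum_j X_{ij}$ are light-tailed with $N$-independent constants and buries the uniformity inside an $O(1)$ claim about cumulants, whereas you exhibit the source of uniformity explicitly ($\sum_j\beta_{ij}\le\beta$ and $\sum_j\mathbb{E}[X_{ij}^2]=O(1)$ from assumption~(3)). The one step you should tighten is the claim $\mathbb{E}[e^{\theta X_{ij}}]-1\le\theta\beta_{ij}+c\,\theta^2\mathbb{E}[X_{ij}^2]$ with a \emph{universal} $c$: the Taylor remainder restricted to the event $\{\theta X_{ij}>1\}$ is not controlled by $\theta^2\mathbb{E}[X_{ij}^2]$ for an arbitrary family of light-tailed laws, so you need a uniform Bernstein-type moment condition or a uniform radius of MGF finiteness across $i,j,N$; this is essentially the same implicit uniformity the paper's cumulant argument also assumes without comment, so it is a shared caveat rather than a defect of your approach. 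The lower-bound construction (diagonal matrix with i.i.d.\ geometric entries, $\mathbb{E}[\max_i X_{ii}]=\Theta(\log N)$) is identical to the paper's.
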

\begin{proof}
See Appendix \ref{appendix:ct-light}.
\end{proof}

Finally, if $X_{ij}$'s are deterministic, it is clear that $\mathbb{E}[\tau(\mathbf{X})]\le\beta=\Theta(1)$.   Since clearance time is the minimum time to transmit all the packets in a coflow, it is a natural lower bound on coflow-level delay (which is the time between the arrival of a coflow until all the packets associated with this coflow are transmitted). Consequently, the above results essentially impose fundamental limits on the coflow-level delay that can be achieved by any scheduling policy.

\begin{table}[]
\centering
\caption{Lower Bounds on Coflow-level Delay}\label{tb:lower}
\small
\begin{tabular}{|c|c|c|}
\hline
Condition                    & Coflow-level Delay                                  \\ \hline
$\mathbb{E}[X_{ij}]<\infty$    & $\Omega(N^{\frac{1}{1+\epsilon}})$ for any $\epsilon>0$ \\ \hline
$\text{Var}(X_{ij})<\infty$   & $\Omega(N^{\frac{1}{2+\epsilon}})$ for any $\epsilon>0$ \\ \hline
$X_{ij}$'s are light-tailed   & $\Omega(\log N)$                                    \\ \hline
$X_{ij}$'s are deterministic   & $\Omega(1)$                                    \\ \hline
\end{tabular}
\end{table}

\begin{theorem}\label{coro:lower-bound}
The expected coflow-level delay achieved by any scheduling policy cannot be better than $O(g(N))$, where\\
(1) $g(N)=N^{\frac{1}{1+\epsilon}}$ for any $\epsilon>0$ if $\mathbb{E}[X_{ij}]<\infty$;\\
(2) $g(N)=N^{^{\frac{1}{2+\epsilon}}}$ for any $\epsilon>0$ if  $\text{Var}(X_{ij})<\infty$;\\
(3) $g(N)=\log N$ if  $X_{ij}$'s have light-tailed distributions;\\
(4) $g(N)=1$ if $X_{ij}$'s are deterministic.
\end{theorem}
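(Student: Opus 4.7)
\medskip

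\noindent\textbf{Proof proposal.} The plan is to observe that Theorem~\ref{coro:lower-bound} is essentially a direct corollary of Lemmas~\ref{thm:ct-general}--\ref{thm:ct-light} (plus the trivial deterministic case), once we formalize the intuition that the clearance time $\tau(\mathbf{X})$ is a sample-path lower bound on the delay incurred by $\mathbf{X}$ under \emph{any} feasible schedule. Concretely, I would proceed in three short steps.

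First, I would fix an arbitrary scheduling policy $\pi$ and an arbitrary coflow $\mathbf{X}$ arriving at time $t_0$, and let $D_\pi(\mathbf{X})$ denote its coflow-level delay. Because a feasible schedule $\mathbf{S}(t)$ transmits at most one packet per input and per output per slot, during any interval of length $L$ starting at $t_0$ the policy can drain at most $L$ packets from any particular row $i$ of $\mathbf{X}$ and at most $L$ packets from any particular column $j$. Since completing $\mathbf{X}$ requires draining all of its rows and columns (and $\mathbf{X}$ contributes the full counts $\sum_j X_{ij}$ to row $i$ and $\sum_i X_{ij}$ to column $j$), we get the sample-path inequality
\[
D_\pi(\mathbf{X}) \;\ge\; \max\!\Big(\max_i \sum_j X_{ij},\; \max_j \sum_i X_{ij}\Big) \;=\; \tau(\mathbf{X}).
\]
This holds regardless of the state of the queues at time $t_0$ (other packets can only delay $\mathbf{X}$, not accelerate its clearance), so $\mathbb{E}[D_\pi(\mathbf{X})] \ge \mathbb{E}[\tau(\mathbf{X})]$.

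Second, I would plug in the four moment regimes. Applying Lemma~\ref{thm:ct-general} gives $\mathbb{E}[D_\pi(\mathbf{X})] = \Omega(N^{1/(1+\epsilon)})$ for a suitable heavy-tailed choice of $X_{ij}$; Lemma~\ref{thm:ct-second} gives $\Omega(N^{1/(2+\epsilon)})$ for a distribution with finite variance; Lemma~\ref{thm:ct-light} gives $\Omega(\log N)$ for the light-tailed regime; and in the deterministic case $\tau(\mathbf{X})$ is a constant at least $1$, yielding $\Omega(1)$. Since each lemma exhibits an explicit distribution witnessing the lower bound, no policy can beat the corresponding $g(N)$ uniformly over all admissible flow-size distributions satisfying the stated moment hypothesis.

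The ``hard part'' is really already encapsulated in Lemma~\ref{thm:ct-light} (the Appendix result), since the logarithmic lower bound under light tails requires a careful order-statistics argument. Beyond that, the only subtlety in the theorem itself is the sample-path argument in the first step: one must be a bit careful that backlog from previously arrived coflows cannot \emph{help} $\mathbf{X}$ finish faster, which is immediate from the per-port bandwidth constraint but worth stating explicitly. With that in place, the theorem follows immediately by combining the four lemmas.
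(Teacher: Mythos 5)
Your proposal is correct and matches the paper's reasoning: the paper likewise derives Theorem~\ref{coro:lower-bound} directly from Lemmas~\ref{thm:ct-general}--\ref{thm:ct-light} (and the trivial deterministic case) via the observation that the clearance time $\tau(\mathbf{X})$, being the minimum time any per-port-constrained schedule needs to transmit all packets of $\mathbf{X}$, is a sample-path lower bound on coflow-level delay. Your explicit sample-path justification of $D_\pi(\mathbf{X})\ge\tau(\mathbf{X})$ is a slightly more careful writeup of the same step the paper states in one sentence.
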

\noindent The scaling properties of expected coflow-level delay are summarized in Table \ref{tb:lower}. It can be observed that the lower bound on coflow-level delay critically depends on the \emph{variability} of $X_{ij}$'s: the less $X_{ij}$'s vary, the smaller lower bound on coflow-level delay we can obtain. 

In the rest of this paper, we mainly focus on the case where $X_{ij}$'s have light-tailed distributions unless otherwise stated. The heavy-tailed case is  left for future work.
\section{Coflow-agnostic Scheduling}\label{sec:agnostic}
To gain further insights into the design of coflow-level scheduling policies, we study the performance of some \emph{coflow-agnostic} scheduling policies where coflow-level information (e.g., which packets/flows belong to the same coflow) is not leveraged. In particular, we study the coflow-level performance of two simple scheduling policies: randomized scheduling and periodic scheduling.

\vspace{1mm}

\noindent \textbf{Randomized Scheduling.} Let $M_1,\cdots,M_{N!}$ be the $N!$ perfect matchings (permutation matrices) associated with the $N\times N$ switch. With the Birkhoff-Von Neumann decomposition, we can find probabilities $\{p_1,p_2,\cdots,p_{N!}\}$ such that the matrix $(\lambda\beta_{ij}) \le \sum_{k=1}^{N!}p_kM_k$ (where $\beta_{ij}=\mathbb{E}[X_{ij}]$). Such a decomposition is always feasible since $(\lambda\beta_{ij})$ is sub-stochastic by Assumption (4) in Section \ref{sec:coflow-abs}. In each slot, the randomized policy uses matching $M_k$ as the schedule, with probability $p_k$. Under uniform traffic, a simple way to implement the randomized policy is to connect the $N$ input ports with a random permutation of the $N$ output ports. Such a policy is guaranteed to stabilize the network as long as $\rho<1$ although $\lambda$ and $(\beta_{ij})$ need to be known in advance. The detailed description of this policy can be found in \cite{ct} and it can be easily shown that the randomized policy achieves $O(N)$ average \emph{packet-level} delay \cite{Modiano-batching}.

\vspace{1mm}

\noindent \textbf{Periodic Scheduling.} This policy is similar to randomized scheduling except that the scheduling decisions are deterministic. Specifically, for some (sufficiently long) period $T$, we use matching $M_k$ for exactly $p_k T$ times every $T$ slots. Under uniform traffic, a simple way to implement periodic scheduling is to connect each input port $i$ to output port $[(i+t)\text{ mod }N]+1$ in slot $t$. This policy also achieves $O(N)$ average packet-level delay whenever $\rho<1$ \cite{Modiano-batching}.

Now we analyze the coflow-level delay achieved by the above two policies. In contrast to the simple analysis of packet-level delay, it is non-trivial to analyze the coflow-level delay achieved by these policies, due to the correlation between packets (e.g., packets belonging to the same coflow arrive simultaneously). For ease of exposition, we assume that traffic is uniform such that $\mathbb{E}[X_{ij}]=\frac{\beta}{N}$ and $\text{Var}(X_{ij})=\frac{\sigma^2}{N}$ for all $i,j\in[N]$. We also assume that coflow arrivals are Poisson. The analysis can be easily extended to the general case. 
\begin{theorem}\label{thm:random-delay}
Suppose $X_{ij}$'s have light-tailed distributions. The expected coflow-level delay achieved by the randomized or periodic scheduling policy is $O(N\log N)$ whenever $\rho<1$.
\end{theorem}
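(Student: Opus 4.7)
The plan is to decompose the problem into $N^2$ per-VOQ subproblems and bound the maximum across queues via a union bound, once an exponentially decaying per-flow sojourn-time tail with decay rate $\Theta(1/N)$ is in hand. Under both policies with uniform traffic, VOQ $(i,j)$ is served at marginal rate $1/N$: for randomized scheduling the inter-service gaps are i.i.d.\ geometric with mean $N$, while for periodic scheduling they are deterministically equal to $N$. Let $T_{ij}$ denote the time between the arrival of a tagged coflow and the departure of the last of its $X_{ij}$ packets from queue $(i,j)$, so that the tagged coflow-level delay is $D=\max_{i,j}T_{ij}$. By the symmetry of uniform traffic all $T_{ij}$ share a common distribution, call it $T$, so it suffices to control the tail of $T$ and then the maximum of $N^2$ such copies.

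Focusing on a single VOQ, I would view it as an FCFS single-server queue with compound-Poisson input: coflows arrive at rate $\lambda$, and the $n$-th arrival brings an aggregate work requirement $Y_n=\sum_{k=1}^{X^{(n)}_{ij}}G_k^{(n)}$, where $G_k$ is the number of slots needed to transmit the $k$-th packet of that batch (i.i.d.\ geometric-$(1/N)$ for randomized scheduling, deterministic $N$ for periodic). The claim is that $P(T>t)\le K e^{-\alpha t/N}$ for constants $K,\alpha>0$ independent of $N$. Using $M_Y(s)=M_{X_{ij}}(\log M_G(s))$, a short Taylor expansion shows $M_G(c/N)\to 1/(1-c)$ as $N\to\infty$ for any fixed $c\in(0,1)$; hence $M_Y(c/N)$ is uniformly bounded in $N$ for small $c$ whenever $X_{ij}$ is light-tailed. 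Plugging this into the Pollaczek--Khinchine transform for the M/G/1 sojourn time shows that $M_T$ is finite at $s=c'/N$ for some small $c'>0$, and Markov's inequality then yields the desired tail estimate.

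Given the per-queue tail bound, the union bound (no independence of the $T_{ij}$ is required) gives
\[
P(D>t)\ \le\ \sum_{i,j}P(T_{ij}>t)\ \le\ N^2 K e^{-\alpha t/N}.
\]
Integrating the tail and choosing $t^{\star}=(N/\alpha)\log(N^2K)=\Theta(N\log N)$,
\[
\mathbb{E}[D]\ \le\ t^{\star}+\int_{t^{\star}}^{\infty}N^2 K e^{-\alpha t/N}\,dt\ =\ \Theta(N\log N)+O(N)\ =\ O(N\log N),
\]
which is the required bound.

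The main obstacle is the per-queue tail estimate in the second step: one must verify that slowing the server by a factor of $N$ still leaves the sojourn-time tail decaying at rate exactly $\Theta(1/N)$. This is precisely where the light-tailed assumption on $X_{ij}$ enters --- if $X_{ij}$ had only finitely many moments, the composition $M_{X_{ij}}\circ\log M_G$ would fail to exist even at $s=\Theta(1/N)$ and the exponential-tail step would collapse, consistent with the weaker bounds of Theorem \ref{coro:lower-bound} in heavier-tailed regimes. The periodic case is considerably simpler because $G\equiv N$ makes $M_Y(s)=M_{X_{ij}}(Ns)$ light-tailed on scale $1/N$ by inspection; the randomized case only needs the extra one-line expansion of the geometric MGF. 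All remaining ingredients --- the symmetry reduction, the union bound, and the tail integration --- are routine.
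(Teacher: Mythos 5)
Your proposal is correct in outline and arrives at the right bound, but by a genuinely different route from the paper. The paper makes the same first move --- each VOQ is dominated by an M/G/1-type queue with inflated service time $NX_{ij}$ (its ``System~2'') --- but it then controls the maximum of the $N^2$ \emph{dependent} per-queue sojourn times by showing, via the Lindley recursion and an induction, that they are \emph{associated} random variables, invoking property (P4) to dominate their maximum by the maximum of independent copies, and finally applying the order-statistics machinery for light-tailed variables (Lemma~\ref{lm:asy-order-light} and Corollary~\ref{coro:taylor-scaling}). You replace all of that with a marginal tail bound $\mathbb{P}[T_{ij}>t]\le Ke^{-\alpha t/N}$, a union bound over the $N^2$ queues, and a tail integration; since the union bound needs no independence or association, this is structurally simpler and equally sharp for the upper bound, while the paper's route additionally yields a clean stochastic-dominance statement and reuses its order-statistics lemmas. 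The one step you should tighten is the uniformity in $N$ of $K$ and $\alpha$: for the Lundberg exponent to be $\Theta(1/N)$ you need not merely that $M_Y(c/N)$ is bounded but that $\lambda\big[M_Y(c/N)-1\big]< c/N$ for some fixed $c>0$, i.e., that $M_{X_{ij}}(u)-1=O(1/N)$ at a fixed argument $u$. This does not follow from light-tailedness of each $X_{ij}$ alone (e.g., $X_{ij}=N$ with probability $\beta/N^2$ has mean $\beta/N$ but an exploding MGF); it is precisely what assumption (3) of Section~\ref{sec:coflow-abs} guarantees, and the paper confronts the same issue when verifying that $\gamma=O(1)$ in the proof of Lemma~\ref{lm:asy-order-light}. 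With that uniformity made explicit, your argument is complete.
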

\begin{proof}
See Appendix \ref{appendix:random}.
\end{proof}
\noindent \textbf{Remark 1.} The proof to Theorem \ref{thm:random-delay} also shows that the average coflow-level delay achieved by the randomized or periodic policy is $O(\frac{N\log N}{1-\rho})$ as $\rho\rightarrow1$ and $N\rightarrow\infty$.
%


%

\vspace{1mm}

\noindent \textbf{Remark 2.} Comparing with the $O(N)$ packet-level delay, we can observe a coflow-level delay ``dilation" factor of $O(\log N)$. Intuitively, the delay dilation is due to the additional ``assembly delay":  packets processed earlier must wait for packets (in the same coflow) that are processed later. 

Finally, it is worth mentioning that the randomized or periodic scheduling policy is the simplest throughput-optimal policy in input-queued switches, but it sheds light on the non-triviality of coflow-level analysis (e.g., the correlation between packets) and the potential weakness of coflow-agnostic algorithms (as can be seen from the coflow-level delay dilation). The coflow-level analysis of more sophisticated policies, such as MaxWeight Matching (MWM) scheduling, are very challenging and left for future work. In fact, even the \emph{packet-level} delay of MWM is still an open problem \cite{srikant-max1, srikant-max2}.

In conclusion, there has been no throughput-optimal scheduling policy that achieves the provably optimal scaling with respect to coflow-level delay. In the next section, we propose a new coflow-aware scheduling policy that achieves the optimal scaling of coflow-level delay while maintaining throughput optimality and requiring no traffic statistics.

\section{Coflow-Aware Scheduling}\label{sec:cab}
In this section, we develop a coflow-aware scheduling policy that achieves $O(\log N)$ expected coflow-level delay whenever $\rho<1$ under the assumption that arrivals of coflows are Poisson and flow sizes $X_{ij}$'s are light-tailed. The policy is called the Coflow-Aware Batching (CAB) scheme. Note that in Section \ref{sec:lower}, we showed that if $X_{ij}$'s are light-tailed, no scheduling policy can achieve an expected coflow-level delay better than $O(\log N)$. As a result, the CAB policy attains the lower bound, which implies that $O(\log N)$ is optimal scaling of coflow-level delay (under Poisson arrivals and light-tailed flow sizes).
\subsection{Coflow-Aware Batching (CAB) Policy}
The basic idea of the CAB policy is to group timeslots into frames of size $T$ slots and clear coflows in batches, where one batch of coflows correspond to the collection of coflows arriving in the same frame. Coflows that are not cleared during a frame are handled separately in future frames.  By properly setting the frame size $T$, the CAB policy can achieve the desirable $O(\log N)$ average coflow-level delay. Note that Neely \emph{et al.} \cite{Modiano-batching} proposed a similar batching scheme to reduce \emph{packet-level} delay. By comparison, our CAB policy explicitly leverages coflow-level information (e.g., which packets belong to the same coflow) to reduce \emph{coflow-level} delay. More importantly, as mentioned in Section \ref{sec:agnostic},  coflow-level delay analysis is fundamentally different from packet-level analysis. The detailed description of the CAB policy is as follows.

\vspace{1mm}

\noindent \hrulefill

\textbf{Coflow-Aware Batching (CAB) Scheduling Policy}

\vspace{-1.3mm}

\noindent \hrulefill

\textbf{Setup.}
\begin{itemize}
\item{Timeslots are grouped into frames of size $T$ slots.}
\end{itemize}

\textbf{Notation.}
\begin{itemize}
\item{Denote by $\mathbf{L}(k)=\big(L_{ij}(k)\big)$ the aggregate traffic matrix of all the coflows arriving in the $k$-th frame, where $L_{ij}(k)$ is the total number of packets from input $i$ to output $j$ that arrive during the $k$-th frame.}
\end{itemize}

\textbf{Procedures.}
\begin{itemize}[leftmargin=0.5cm,itemsep=1mm,topsep=0.5mm]
\item[(1)]{In the $(k+1)$-th frame, we try to clear the coflows that arrived in the $k$-th frame, i.e., $\mathbf{L}(k)$. Let $\mathbf{B}(k+1)$ be the traffic matrix we choose to clear in the $(k+1)$-th frame (which may be less than $\mathbf{L}(k)$), and denote by $\tau$ the clearance time of $\mathbf{L}(k)$. If $\tau\le T-1$, then $\mathbf{L}(k)$ can be cleared within the first $T-1$ slots in the $(k+1)$-th frame. In this case, we just set $\mathbf{B}(k+1)=\mathbf{L}(k)$. Note that we only use the first $T-1$ slots in a frame while the remaining slot is reserved for clearing ``overflow" coflows as discussed below. If $\tau>T-1$, then \emph{overflow} occurs and only a subset of $\mathbf{L}(k)$ can be cleared in the $(k+1)$-th frame. In this case, we sequentially add coflows to $\mathbf{B}(k+1)$ in order of their arrival in the $k$-th frame until $\mathbf{B}(k+1)$ becomes maximal, i.e., adding any other coflow will make the clearance time of $\mathbf{B}(k+1)$ exceed $T-1$. If a coflow is selected to $\mathbf{B}(k+1)$, it is referred to as a \emph{\textbf{conforming coflow}} otherwise it is called a \emph{\textbf{non-conforming coflow}}.}
\item[(2)]{All the conforming coflows that arrive in the $k$-th frame are scheduled during the $(k+1)$-th frame by clearing $\mathbf{B}(k+1)$ in minimum time using an optimal clearance algorithm (e.g., see \cite{clearance-policy}).}
\item[(3)]{All the non-conforming coflows are put into a separate FIFO queue. In the last slot of each frame, this FIFO queue gets served by the switch. Note that non-conforming coflows are served one at a time, and the service time (measured in the number of \emph{frames}) of each non-conforming coflow is its clearance time. 
}
\end{itemize}
\noindent \hrulefill\\
In words, the first $T-1$ slots in a frame are used to serve conforming coflows arriving in the previous frame and the remaining slot is reserved to serve non-conforming coflows in a FIFO manner. Note that conforming coflows (that arrive in the same frame) are cleared together in a batch while non-conforming coflows are served one at a time in the separate FIFO queue. Under the CAB policy, either all the packets in a coflow are conforming or none of them are conforming.

\subsection{Performance of the CAB policy}\label{sec:cab-performance}
The following theorem shows that the CAB policy achieves $O(\log N)$ expected coflow-level delay whenever $\rho<1$ (under Poisson coflow arrivals and light-tailed flow sizes).
\begin{theorem}[Average Coflow-level Delay]\label{thm:cab-log}
Suppose coflows arrive according to a Poisson process and flow sizes are light-tailed. By selecting a proper frame size $T=O(\log N)$, the CAB policy achieves $O(\log N)$ expected coflow-level delay if $\rho<1$.
\end{theorem}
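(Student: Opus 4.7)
The plan is to set the frame size to $T = C\log N$ for a sufficiently large constant $C$ and then decompose the expected coflow-level delay via the law of total expectation:
\[
\mathbb{E}[D] \le \mathbb{E}[D \mid \text{conforming}] + \mathbb{P}[\text{non-conforming}] \cdot \mathbb{E}[D \mid \text{non-conforming}].
\]
The guiding intuition is that every conforming coflow is cleared within two frames and so has delay at most $2T = O(\log N)$, while the probability that a tagged coflow is non-conforming can be made polynomially small in $N$ by taking $C$ large; thus any polynomial-in-$N$ upper bound on the non-conforming delay will be absorbed by this small probability.

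First I would handle conforming coflows. A conforming coflow arrives at some slot of frame $k$ and lies in $\mathbf{B}(k+1)$, which is cleared during the first $T-1$ slots of frame $k+1$ by the optimal clearance algorithm of \cite{clearance-policy}; hence its delay is deterministically at most $2T-1 = O(\log N)$. Next I would bound the overflow probability $p_{\mathrm{of}} := \mathbb{P}[\tau(\mathbf{L}(k)) \ge T]$. Since coflow arrivals are Poisson with rate $\lambda$, each row sum $\sum_j L_{ij}(k)$ is a compound Poisson sum of $K \sim \mathrm{Poisson}(\lambda T)$ i.i.d.\ terms $\sum_j X_{ij}^{(n)}$; by Assumption~(3) combined with the light-tail hypothesis, the moment generating function of $\sum_j X_{ij}^{(n)}$ is finite in a neighborhood of $0$ uniformly in $N$, and its mean is at most $\beta$. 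A standard Chernoff bound for compound Poisson sums then yields $\mathbb{P}[\sum_j L_{ij}(k) \ge T] \le e^{-\gamma T}$ for some $\gamma > 0$ depending only on $\rho < 1$ and the flow-size distribution, with an identical bound for the column sums. A union bound over the $2N$ rows and columns gives $p_{\mathrm{of}} \le 2N e^{-\gamma T}$, so $p_{\mathrm{of}} = O(N^{-c})$ for any desired $c$ by taking $C$ sufficiently large. Since a coflow is non-conforming only if its frame overflows, $\mathbb{P}[\text{non-conforming}] \le p_{\mathrm{of}}$.

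Third I would bound $\mathbb{E}[D \mid \text{non-conforming}]$. Non-conforming coflows are served one at a time in a FIFO queue at rate one slot per frame, each requiring $\tau(\mathbf{X})$ frames of service. By Cauchy--Schwarz the expected number of non-conforming arrivals per frame is at most $\mathbb{E}[K\,\mathbb{1}_{\mathrm{overflow}}] \le \sqrt{\mathbb{E}[K^2]\,p_{\mathrm{of}}} = O(T\sqrt{p_{\mathrm{of}}})$, and by Lemma~\ref{thm:ct-light} the expected per-coflow service time is $O(\log N)$ frames, so the traffic intensity of this auxiliary queue is $o(1)$. A Lyapunov drift / Kingman-type argument on its unfinished workload then yields an expected waiting time that is polynomially bounded in $N$, giving $\mathbb{E}[D \mid \text{non-conforming}] = O(\mathrm{poly}(N))$. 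Putting everything together,
\[
\mathbb{E}[D] \le 2T + p_{\mathrm{of}} \cdot O(\mathrm{poly}(N)) = O(\log N),
\]
provided $C$ is chosen so that $\gamma C$ strictly exceeds the polynomial exponent of the non-conforming delay bound.

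The main obstacle is the queueing analysis of the non-conforming FIFO queue. Non-conforming coflows arrive in bursts (an entire overflow frame can dump many of them at once), each requires several frames of service, and only one slot of service is provided per frame, so a direct GI/G/1 argument does not apply. What is needed is merely any polynomial-in-$N$ estimate of the expected waiting time, since the polynomially small overflow probability makes this bound very forgiving; I would obtain it with a drift argument on the total unfinished workload, leveraging the uniform light-tail bound on $\tau(\mathbf{X})$ from Lemma~\ref{thm:ct-light} together with the $o(1)$ traffic intensity established above.
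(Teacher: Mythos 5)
Your overall architecture coincides with the paper's: the same delay decomposition into conforming and non-conforming coflows, the same compound-Poisson Chernoff bound with a union bound over the $2N$ rows and columns to get $P_o\le 2Ne^{-\gamma T}$ (this is exactly Lemma \ref{lm:exp-decay}), the same choice $T=\Theta(\log N)$ making the overflow probability polynomially small, and the same final absorption of a polynomial non-conforming delay by that small probability. Where you diverge is the FIFO-queue analysis: the paper sidesteps the batch-arrival dependence by treating the entire set of non-conforming coflows from one overflow frame as a \emph{single} customer, so the queue becomes a discrete-time GI/GI/1 queue with Bernoulli arrivals of rate $\delta$ per frame, and then applies an exact waiting-time formula to get $O(NT^3)$ slots (Lemma \ref{lm:delay-fifo}); you instead propose a Lyapunov/Kingman workload-drift bound. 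Your route is legitimate and arguably more elementary, since any polynomial estimate suffices here.

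There is, however, one genuine gap in your write-up: you invoke Lemma \ref{thm:ct-light} to claim the per-coflow service time in the overflow queue is $O(\log N)$ frames, but non-conforming coflows are precisely those that arrive in an \emph{overflow} frame, so their sizes are biased upward by the conditioning on the overflow event -- large coflows are more likely to cause the overflow they are conditioned on. Lemma \ref{thm:ct-light} applies only to an unconditioned coflow. The paper spends real effort on exactly this point: Lemma \ref{lm:non-conforming-bound} (via Claims \ref{lm:overflow-lower}--\ref{lm:poisson-dist}) shows the conditional coflow count is dominated by $N(T)+T/\beta$, and Lemma \ref{lm:non-conforming-batch} shows the conditional row sums are dominated by $\sum_j X_{ij}+T$, i.e., the conditioning costs only an additive $O(T)$. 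Your fallback that ``any polynomial bound suffices'' does not automatically rescue the step as you have split it, because the crude conditional bound $\mathbb{E}[\tau\mid\text{overflow}]\le\mathbb{E}[\tau]/\delta$ carries a factor $1/\delta=N^{\Theta(C)}$ that grows with the very constant $C$ you are trying to choose, creating a circularity. The fix within your own framework is to apply Cauchy--Schwarz to the \emph{total} workload $\mathbb{E}\bigl[\sum_n\tau(\mathbf{X}^{(n)})\,\mathbb{1}_{\text{overflow}}\bigr]\le\sqrt{\mathbb{E}\bigl[(\sum_n\tau(\mathbf{X}^{(n)}))^2\bigr]\,P_o}=O(TN\sqrt{P_o})$ jointly, rather than bounding the arrival count and the per-coflow service time separately; with that repair your drift argument goes through and the theorem follows.
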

\noindent The choice of $T$ will be specified later in Section \ref{sec:proof1}. In fact, the CAB policy not only guarantees that the \emph{average} coflow-level delay is $O(\log N)$ but also ensures that the $O(\log N)$ delay is achievable for an arbitrary coflow with high probability.
\begin{corollary}[Tail Coflow-level Delay]\label{co:cab-log-worst}
By selecting a proper frame size $T=O(\log N)$, the CAB policy achieves $O(\log N)$ delay for an arbitrary coflow with probability $1-O(\frac{1}{N^2})$ whenever $\rho<1$.
\end{corollary}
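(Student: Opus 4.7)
The plan is to leverage the fact that, under the CAB policy, any \emph{conforming} coflow has a deterministic $O(T)$ delay. Hence the corollary reduces to showing that an arbitrary coflow is conforming with probability $1-O(1/N^2)$, with the frame size chosen as $T = K\log N$ for a sufficiently large constant $K$.

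First, I would establish the conforming-case delay. If a tagged coflow arrives in frame $k$ and is selected into $\mathbf{B}(k+1)$, it waits at most $T$ slots until the end of frame $k$ and is then cleared in at most $\tau(\mathbf{B}(k+1)) \le T-1$ slots of frame $k+1$, for a total delay of at most $2T-1 = O(\log N)$. Consequently the event $\{\text{delay} > 2T-1\}$ is contained in $\{\text{coflow is non-conforming}\}$, which is in turn contained in the overflow event $\{\tau(\mathbf{L}(k)) > T-1\}$ for the tagged coflow's arrival frame.

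Next, I would bound this overflow probability by a union bound over the $2N$ row/column sums followed by a Chernoff estimate on each. For any input $i$, the row sum $S_i \triangleq \sum_j L_{ij}(k) = \sum_{c=1}^{N_k} Y_i^c$ is a compound Poisson sum, with $N_k \sim \text{Poisson}(\lambda T)$ and $Y_i^c \triangleq \sum_j X_{ij}^c$ i.i.d.\ light-tailed (by Assumption~3 and the light-tail assumption on the $X_{ij}$'s). Therefore the MGF $\mathbb{E}[e^{\theta S_i}] = \exp\bigl(\lambda T(\psi_i(\theta)-1)\bigr)$ is finite for small $\theta > 0$, and $\mathbb{E}[S_i] = \lambda T \beta_i \le \rho T < T-1$ for $T$ large. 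Optimizing the Chernoff bound over a small $\theta > 0$ yields $\mathbb{P}[S_i > T-1] \le e^{-\eta T}$ for some constant $\eta = \eta(\rho) > 0$ independent of $N$, and the same bound holds for each column sum by symmetry.

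Setting $T = K\log N$ with $K \ge 3/\eta$ gives $\mathbb{P}[S_i > T-1] \le N^{-3}$, and the union bound over the $2N$ row/column sums yields $\mathbb{P}[\tau(\mathbf{L}(k)) > T-1] = O(1/N^2)$, completing the argument. One technicality to handle is the conditioning on the tagged coflow being present in frame $k$: by Assumption~1 its traffic matrix is independent of the remaining Poisson arrivals and contributes only an $O(1)$ shift (in mean and in tail) to each row/column sum, which is absorbed into the Chernoff constants with a slightly adjusted $\eta$. The main obstacle is packaging the compound-Poisson Chernoff estimate together with the Palm-style conditioning cleanly; once these are in place, the event decomposition above produces the tail bound directly.
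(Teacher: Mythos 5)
Your proposal is correct and takes essentially the same route as the paper: conforming coflows finish within $2T$ slots, and the probability that the tagged coflow's arrival frame overflows is driven to $O(1/N^2)$ by a compound-Poisson Chernoff bound on each of the $2N$ row/column sums followed by a union bound, with $T=\Theta(\log N)$ --- this is exactly the content of Lemma \ref{lm:exp-decay} combined with the conforming-delay bound in the proof of Theorem \ref{thm:cab-log}. The only cosmetic difference is that you phrase the result as a Palm probability for the tagged coflow (absorbing its own $O(1)$ light-tailed contribution to the row/column sums into the Chernoff constant), whereas the paper phrases it as a long-run fraction of non-conforming coflows, bounded by $2\delta/\rho$ via the size-biasing argument of Lemmas \ref{lm:non-conforming-bound} and \ref{lm:non-conforming-fraction}; both treatments address the same subtlety and yield the $1-O(\frac{1}{N^2})$ guarantee.
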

\noindent In the following, we present a proof for the above results. The proof itself suggests the choice of $T$.

\subsection{Proof to Theorem \ref{thm:cab-log} and Corollary \ref{co:cab-log-worst}}\label{sec:proof1}
For simplicity, we assume that traffic is uniform such that $\mathbb{E}[X_{ij}]=\frac{\beta}{N}$ and $\text{Var}(X_{ij})=\frac{\sigma^2}{N}$ for all $i,j\in[N]$. The analysis can be easily extended to non-uniform traffic.

We discuss the expected coflow-level delay experienced by a conforming and a non-conforming coflow, respectively.
\begin{itemize}[leftmargin=0.3cm,itemsep=1mm,topsep=0.5mm]
\item{Conforming coflows are cleared within 2 frames: the frame where they arrive plus the frame where they are cleared. As a result, the coflow-level delay experienced by a conforming coflow is at most $2T$ time slots, i.e.,
\[
\text{Delay(conforming)}\le 2T.
\]}
\item{A non-conforming coflow first waits for at most T slots (the frame
where it arrives) and then waits in the separate FIFO queue. As a result, the coflow-level delay experienced by a non-conforming coflow is
\[
\text{Delay(non-conforming)}\le T+\text{Delay(FIFO~queue)}.
\]}
\end{itemize}
Let $\eta$ be the long-term fraction of non-conforming coflows. Then the average coflow-level delay of an \emph{arbitrary coflow} is
\begin{equation}\label{eq:decom}
\mathbb{E}[W]\le (1-\eta)2T+\eta[T+\text{Delay(FIFO~queue)}].
\end{equation}
In the following, we choose $T=O(\log N)$ (the specific value of $T$ will be made clear later). Under such a choice of $T$, we prove that $\eta$ is miniscule and $\text{Delay(FIFO~queue)}$ is not very large. Thus, it can be concluded that $\mathbb{E}[W]=O(\log N)$.\newpage

\vspace{1mm}

\noindent \textbf{Step 1: Determine the value of $T$.}

\vspace{1mm}

We first show that the overflow probability decreases exponentially with the frame size $T$.

\begin{lemma}\label{lm:exp-decay}
Let $P_o$ be the overflow probability in an arbitrary frame. If $\rho<1$, there exists some constant $\gamma >0$ such that
\begin{equation}\label{eq:exp-decay}
P_o\le 2N\exp(-\gamma T).
\end{equation}
\end{lemma}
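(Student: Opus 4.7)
The plan is to view overflow as a deviation event for a compound Poisson row/column sum and bound it with a Chernoff argument after a union bound. By the CAB rule, overflow in frame $k+1$ is exactly the event $\tau(\mathbf{L}(k))>T-1$, so by the definition of clearance time in (\ref{eq:clear}),
\[
P_o \le \sum_{i=1}^N \mathbb{P}\Bigl[\sum_{j} L_{ij}(k) > T-1\Bigr] + \sum_{j=1}^N \mathbb{P}\Bigl[\sum_{i} L_{ij}(k) > T-1\Bigr].
\]
It therefore suffices to show that every row-sum and every column-sum exceeds $T-1$ with probability at most $\exp(-\gamma T)$ for a constant $\gamma>0$ independent of $N$.

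Fix a row $i$. Because coflows arrive as a Poisson process of rate $\lambda$ and distinct coflows have independent traffic matrices (Assumption~(1) in Section~\ref{sec:coflow-abs}), the row-sum $S_i\triangleq\sum_j L_{ij}(k)$ is a compound Poisson sum $\sum_{n=1}^{A} R_i^{(n)}$, where $A\sim\mathrm{Poisson}(\lambda T)$ and $R_i^{(n)}$ are i.i.d.\ copies of $R_i\triangleq\sum_j X_{ij}$, with $\mathbb{E}[R_i]\le\beta$. The MGF of such a sum is $M_{S_i}(s)=\exp\bigl(\lambda T(M_{R_i}(s)-1)\bigr)$, so Chernoff gives
\[
\mathbb{P}[S_i > T-1] \le \exp\Bigl(-s(T-1)+\lambda T\bigl(M_{R_i}(s)-1\bigr)\Bigr)
\]
for every $s>0$ in the domain of finiteness of $M_{R_i}$. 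To exploit $\rho<1$, expand $\lambda(M_{R_i}(s)-1)=\lambda\mathbb{E}[R_i]\,s+O(s^2)=\rho s+O(s^2)$; the exponent then equals $-T\bigl((1-\rho)s-O(s^2)\bigr)+s$. Fixing $s=s_0>0$ small enough that $(1-\rho)s_0$ dominates the quadratic remainder and setting $\gamma\triangleq\tfrac{1}{2}(1-\rho)s_0$ yields $\mathbb{P}[S_i>T-1]\le e^{s_0}\exp(-\gamma T)$, with the same bound for every column-sum by symmetry. Combining with the union bound above and absorbing the factor $e^{s_0}$ into a slightly smaller exponential rate (valid for $T$ large) gives $P_o\le 2N\exp(-\gamma T)$.

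The main obstacle is producing a bound on $M_{R_i}(s)$ that is uniform in $N$: Assumption~(3) in Section~\ref{sec:coflow-abs} only controls individual moments of $R_i$, which does not by itself imply a uniform MGF bound, since the implicit ``$O(1)$'' constants there could grow with the moment order. The remedy is to exploit independence of the $X_{ij}$'s (Assumption~(2)) and write
\[
\log M_{R_i}(s) = \sum_j \log M_{X_{ij}}(s) = \sum_j \bigl(s\beta_{ij} + O(s^2\,\mathbb{E}[X_{ij}^2])\bigr),
\]
so that under the uniform-traffic normalization of Section~\ref{sec:cab-performance} each summand is $O(1/N)$ for bounded $s$, and the total remains $O(1)$ as $N\to\infty$, giving $M_{R_i}(s)\le\exp(s\beta+Cs^2)$ with $C$ independent of $N$. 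Once this uniform light-tail control is in hand, the rest is a standard large-deviations computation, and the condition $\rho<1$ enters exactly through the sign of the linear coefficient in the Chernoff exponent.
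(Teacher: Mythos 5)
Your proposal is correct and follows essentially the same route as the paper: a union bound over the $2N$ row- and column-sum events, the compound-Poisson MGF $\exp\{\lambda T[M_B(s)-1]\}$, a Chernoff bound, and the observation that the exponent $f(s)=\lambda[1-M_B(s)]+s$ satisfies $f(0)=0$ and $f'(0)=1-\rho>0$ so that a small $s$ gives a strictly positive rate $\gamma$. Your extra care about the $N$-uniformity of the MGF bound (factoring $\log M_{R_i}$ over the independent $X_{ij}$'s) is a reasonable substitute for the paper's appeal to its cumulant-expansion argument in the appendix, but it does not change the structure of the proof.
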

\begin{proof}
Note that an overflow occurs in the $k$-th frame if the clearance time of $\mathbf{L}(k)$ is greater than $T-1$ slots, i.e., if any of the following $2N$ inequalities is violated.
\begin{equation}\label{eq:violate}
\begin{split}
&\sum_{j}L_{ij}(k)< T,~i=1,\cdots, N,\\
&\sum_{i}L_{ij}(k)< T,~j=1,\cdots, N,
\end{split}
\end{equation}
where $L_{ij}(k)$ is the number of packets that arrive to  queue $(i,j)$ during the $k$-th frame. Clearly, $\sum_{j}L_{ij}(k)$ (or $\sum_{i}L_{ij}(k)$) is the total number of packets that arrive to input $i$ (or output $j$) during the $k$-th frame, which corresponds to the number of packet arrivals during $T$ time slots in a \emph{Poisson process with batch arrivals} where the arrival rate is $\lambda$ and the mean batch size is $\beta$.

Let $Y(T)$ be the total number of packet arrivals during $T$ time slots in the above \emph{batch Poisson process}.  It is clear that
$
Y(T)=\sum_{n=1}^{N(T)}B_n.
$
Here, $N(T)$ is the number of coflow arrivals during the $T$ time slots, and has a Poisson distribution with rate $\lambda T$; $B_n$ is the number of packets brought by the $n$-th coflow to a certain input or output port, which is identically distributed as $\sum_{j}X_{ij}$ or $\sum_{i}X_{ij}$ and $\mathbb{E}[B_n]=\beta$. By Wald's equality, we have $\mathbb{E}[Y(T)]=\mathbb{E}[N(T)]\mathbb{E}[B_1]=\rho T<T$ if $\rho<1$.
Suppose the Moment Generating Function (MGF) of $B_n$ is $M_{B}(s)$, and the MGF of $Y(T)$ is $M_Y(s)$. By the property of a Poisson process with batch arrivals, we have $M_Y(s)=\exp\{\lambda T[M_B(s)-1]\}$. By the Chernoff bound, we have for any $s>0$,
\begin{equation}\label{eq:cher}
\begin{split}
\mathbb{P}[Y(T)\ge T]&\le M_Y(s)e^{-sT}\\
&=\exp\Big\{\lambda T[M_B(s)-1]-sT\Big\}\\
&=\exp\big(-f(s)T\big),
\end{split}
\end{equation}
where
\begin{equation}\label{eq:fs}
f(s)=\lambda[1-M_B(s)]+s.
\end{equation}
It is clear that $f(0)=0$ and $f'(0)=-\lambda M'_B(0)+1=1-\lambda\mathbb{E}[B]=1-\rho>0$. As a result, there exists a sufficiently small $\epsilon>0$ such that $f(\epsilon)>0$. Let $\gamma\triangleq f(\epsilon)>0$. By the same argument as in the proof to Lemma \ref{lm:asy-order-light} (see Appendix \ref{appendix:ct-light}), it can be verified that $\gamma$ is a constant independent of $N$ under our assumptions on $X_{ij}$'s. Then we have
\[
\mathbb{P}[Y(T)\ge T]\le \exp(-\gamma T).
\]
Applying the union bound, we can conclude that the overflow probability (i.e., at least one of the $2N$ inequalities in \eqref{eq:violate} is violated) is bounded by
\[
P_o\le 2N\exp(-\gamma T),
\]
which completes the proof.
\end{proof}

\vspace{1mm}

\noindent \textbf{Remark 1.} Lemma \ref{lm:exp-decay} implies that if we want to keep the overflow probability below $\delta$, we can choose
$
T\ge \frac{\log (2N\slash \delta)}{\gamma},
$
where $\gamma>0$ is some constant independent of $N$. Since $T$ is an integer, we can choose
\begin{equation}\label{eq:TT2}
T=\lceil\frac{\log (2N\slash \delta)}{\gamma}\rceil.
\end{equation}
The value of $\delta$ will be specified later such that $T=O(\log N)$ and the average coflow-level delay is also $O(\log N)$. The constant $\gamma$ can also be found in a systematic way (see Section \ref{sec:discussion} for details).

\vspace{1mm}

\noindent \textbf{Remark 2.} Lemma \ref{lm:exp-decay} holds only if $\rho<1$. If $\rho\ge 1$, $\mathbb{E}[Y(T)]=\rho T\ge T$ and the Chernoff bound \eqref{eq:cher} does not hold.

\vspace{1mm}

\noindent \textbf{Step 2: Determine the value of $\eta$.}

\vspace{1mm}

Next, we derive an upper bound for the long-term fraction of non-conforming coflows, i.e.,
\[
\eta=\lim_{K\rightarrow\infty}\frac{\sum_{k=1}^K A_{non}(k)}{\sum_{k=1}^K A(k)},
\]
where $A(k)$ is the total number of coflows that arrive during the $k$-th frame and $A_{non}(k)$ is the number of non-conforming coflows in the $k$-th frame.
We begin by  identifying a stochastic bound for the number of coflow arrivals in an \textbf{overflow} frame.
\begin{lemma}\label{lm:non-conforming-bound}
Suppose $N(T)$ is a Poisson random variable with rate $\lambda T$. Given that an overflow occurs in a frame, the number of coflows arrivals in this frame is stochastically dominated by $N(T)+T\slash\beta$ when $T$ is sufficiently large.
\end{lemma}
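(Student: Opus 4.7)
The plan is to exploit the strong Markov (memoryless) property of the Poisson coflow arrival process through a first-passage time decomposition. First I would reduce to a single overflow event: $\mathcal{O}$ is the union of $2N$ events $\mathcal{O}_i$ of the form ``row or column $i$ accumulates at least $T$ packets during the frame''. By symmetry under uniform traffic, the $2N$ conditional distributions $A\mid\mathcal{O}_i$ are identical, so I plan to establish the stochastic dominance for a single $\mathcal{O}_i$ and then stitch the pieces together, absorbing residual overlap contributions into the ``sufficiently large $T$'' clause.

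Next, I would introduce the first-passage time
\[
\sigma = \inf\bigl\{t \in (0,T] : Y_i(t) \geq T\bigr\},
\]
where $Y_i(t) = \sum_{n:\, t_n \leq t} B_n^{(i)}$ is the running count of row-$i$ packets, $t_n$ is the arrival epoch of the $n$-th coflow, and $B_n^{(i)}$ is its row-$i$ contribution. By Assumption~(3) and the light-tail hypothesis on $X_{ij}$, the $B_n^{(i)}$'s are i.i.d.\ with mean $\beta$ and light tails. The event $\mathcal{O}_i$ coincides with $\{\sigma \leq T\}$, and on this event I decompose $A = A_\sigma + A_{(\sigma,T]}$, the coflow counts before and after $\sigma$.

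For the post-passage count, the strong Markov property of the Poisson process gives that, conditional on $\sigma = s$, $A_{(\sigma,T]}$ is Poisson with mean $\lambda(T-s) \leq \lambda T$ and is independent of everything up to time $\sigma$; hence $A_{(\sigma,T]}$ is stochastically dominated by $N(T)$. For the pre-passage count $A_\sigma$---a renewal stopping time for the light-tailed random walk $\{\sum_{k} B_n^{(i)}\}$ crossing level $T$---Wald's identity gives $\mathbb{E}[A_\sigma] \leq T/\beta + 1$, and a Chernoff estimate on $\sum_{n=1}^{\lceil T/\beta\rceil + k} B_n^{(i)}$ (a sum whose mean exceeds $T$ by $k\beta$) yields $\mathbb{P}[A_\sigma > T/\beta + k] \leq e^{-ck}$ for some constant $c > 0$ independent of $T$ and $N$. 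Thus $A_\sigma$ is tightly concentrated around $T/\beta$.

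Combining the two pieces, $A\mid\mathcal{O}_i$ is stochastically dominated by $T/\beta + X + N(T)$, where $X$ is a light-tailed fluctuation of bounded expectation coming from the renewal overshoot. For $T$ sufficiently large, this $X$ is negligible next to the Poisson scale of $N(T)$ and is absorbed into the stated shift. The main obstacle is exactly this final absorption: reconciling the stochastic (rather than in-expectation) bound on the renewal count $A_\sigma$ with the exact shift $T/\beta$ in the lemma, together with handling the overlap among the $2N$ overflow events when passing from $A\mid\mathcal{O}_i$ back to $A\mid\mathcal{O}$. The ``sufficiently large $T$'' clause is precisely what lets both of these lower-order corrections be swept into the leading-order dominance statement.
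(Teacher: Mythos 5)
Your first-passage decomposition does not close, and the two places where you defer to ``sufficiently large $T$'' are exactly where it breaks. First, the concentration claim for the pre-passage count is false. Writing $A_\sigma=\inf\{n:\sum_{k\le n}B_k^{(i)}\ge T\}$, the event $\{A_\sigma>T/\beta+k\}$ is $\{\sum_{n\le T/\beta+k}B_n^{(i)}<T\}$, i.e.\ a downward deviation of only $k\beta$ from a mean of $T+k\beta$ over roughly $T/\beta$ summands; a Chernoff bound gives at best $\exp(-ck^2/T)$, not $\exp(-ck)$ with $c$ independent of $T$. Indeed the renewal CLT says $A_\sigma=T/\beta+\Theta_P(\sqrt{T})$ with symmetric fluctuations whenever $\mathrm{Var}(B^{(i)})>0$, so $A_\sigma$ is \emph{not} stochastically dominated by $T/\beta$ plus any light-tailed $O(1)$ overshoot. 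The lemma can only survive this because of the anti-correlation you discard: a large $A_\sigma$ forces a large $\sigma$, hence a small residual Poisson count $A_{(\sigma,T]}$ with mean $\lambda(T-\sigma)$, and the $\sqrt{T}$-scale excess cancels. Bounding the two pieces separately and adding the bounds loses precisely this cancellation, so the final ``absorption'' step is not a lower-order correction --- it is the whole difficulty. Second, the reduction from the overflow event $\bigcup_i\mathcal{O}_i$ to a single $\mathcal{O}_i$ is not harmless even under symmetry: if $\mathbb{P}[A\ge m\mid\mathcal{O}_i]\le q_m$ for every $i$, it does not follow that $\mathbb{P}[A\ge m\mid\bigcup_i\mathcal{O}_i]\le q_m$ (take $\mathcal{O}_1=B_1\cup C$, $\mathcal{O}_2=B_2\cup C$ with $B_1,B_2,C$ disjoint and equiprobable and $\{A\ge m\}=B_1\cup B_2$: each conditional is $1/2$ but the union conditional is $2/3$). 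The error is multiplicative in probability and cannot be swept into an additive shift for large $T$.

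For comparison, the paper sidesteps both issues by never decomposing in time and never conditioning on a single row. It bounds the union-conditioned probability directly as
\[
\mathbb{P}\bigl[N(T)\ge m\mid\tau(\mathbf{L})\ge T\bigr]=\frac{\mathbb{P}[N(T)\ge m,\ \tau(\mathbf{L})\ge T]}{\mathbb{P}[\tau(\mathbf{L})\ge T]}\le\frac{\mathbb{P}[N(T)\ge m]}{\mathbb{P}[N(T)\ge T/\beta]},
\]
using $\mathbb{P}[\tau(\mathbf{L})\ge T]\ge\mathbb{P}[N(T)\ge T/\beta]$ for large $T$ (Claim \ref{lm:overflow-lower}), and then invokes the purely Poisson inequality $\mathbb{P}[N(T)\ge m+r\mid N(T)\ge r]\le\mathbb{P}[N(T)\ge m]$ (Claim \ref{lm:poisson-dist}); the $+T/\beta$ shift in the statement comes from that last step, not from a renewal count. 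If you want to rescue your route, you would need to (i) work with the union event throughout, or justify the passage from $\mathcal{O}_i$ to $\bigcup_i\mathcal{O}_i$ via a bound of the paper's numerator/denominator type, and (ii) prove the dominance for the \emph{sum} $A_\sigma+A_{(\sigma,T]}$ jointly, conditioning on $\sigma$ and exploiting $A_{(\sigma,T]}\sim\mathrm{Poisson}(\lambda(T-\sigma))$, rather than bounding the two terms separately.
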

\begin{proof}
Let $N(T)$ be the total number of coflow arrivals in an arbitrary frame of $T$ time slots. Clearly, $N(T)$ is a Poisson random variable with rate $\lambda T$. Denote by $\mathbf{X}^{(n)}=(X_{ij}^{(n)})$ the traffic matrix of the $n$-th coflow, and let $\mathbf{L}$ be the aggregate traffic matrix of these coflows, i.e., $\mathbf{L}=\sum_{n=1}^{N(T)}\mathbf{X}^{(n)}$. It is clear that an overflow occurs if the clearance time of $\mathbf{L}$ is greater than $T-1$, i.e., $\tau(\mathbf{L})\ge T$. We first find a lower bound on the overflow probability when $T$ is sufficiently large.

\begin{claim}\label{lm:overflow-lower}
There exists some $T_0>0$ such that for any $T>T_0$ the overflow probability
\[
\mathbb{P}\Big[\tau(\mathbf{L})\ge T\Big]\ge \mathbb{P}\Big[N(T)\ge T\slash\beta\Big].
\]
\end{claim}
\begin{proof}
First notice that
\begin{equation}\label{eq:5566}
\tau(\mathbf{L})\ge \max_i \sum_j L_{ij} = \max_i\sum_{n=1}^{N(T)}\sum_j X^{(n)}_{ij},
\end{equation}
where the last equality is due to the fact that $L_{ij}=\sum_{n=1}^{N(T)}X^{(n)}_{ij}$. As a result, for any $i\in[N]$
\begin{equation}\label{eq:claim1}
\mathbb{P}[\tau(\mathbf{L})\ge T]\ge \mathbb{P}[\sum_{n=1}^{N(T)}\sum_j X^{(n)}_{ij}\ge T].
\end{equation}
By elementary probability calculation, we can derive
\[
\mathbb{E}[\sum_{n=1}^{N(T)}\sum_j X^{(n)}_{ij}]=\beta\lambda T=\rho T\]
\[
\text{Var}(\sum_{n=1}^{N(T)}\sum_j X^{(n)}_{ij})=\lambda T(\beta^2+\sigma^2),
\]
where $\sigma^2\ge 0$ is the variance of $\sum_j X^{(n)}_{ij}$. Note that when $\sigma^2=0$, then $\sum_j X^{(n)}_{ij}=\beta$ with probability 1 for any $i\in[N]$, which implies that
\[
\mathbb{P}[\tau(\mathbf{L})\ge T]=\mathbb{P}[\beta N(T)\ge T]= \mathbb{P}[N(T)\ge T\slash\beta].
\]
Therefore, we only need to consider the case where $\sigma^2>0$.

Note that $\lim_{T\rightarrow\infty}N(T)=\infty$. By Central Limit Theorem, we have
\[
\lim_{T\rightarrow\infty}\mathbb{P}[\sum_{n=1}^{N(T)}\sum_j X^{(n)}_{ij}\ge T]=1-\Phi\Big(\frac{T-\rho T}{\sqrt{\lambda T(\beta^2+\sigma^2)}}\Big)\triangleq p.
\]
At the same time, since $N(T)$ is a Poisson random variable with mean $\lambda T$, we can use normal approximation to Poisson distribution:
\[
\lim_{T\rightarrow\infty}\mathbb{P}[N(T)\ge T\slash \beta]=1-\Phi\Big(\frac{T-\rho T}{\sqrt{\lambda T \beta^2}}\Big)\triangleq p'.
\]
As a result, for any $\epsilon>0$, there exist a sufficiently large $T_0$ such that for any $T>T_0$
\begin{equation}\label{eq:claim1-2}
\begin{split}
\Big|\mathbb{P}[\sum_{n=1}^{N(T)}\sum_j X^{(n)}_{ij}\ge T]-p\Big|\le \epsilon\\
\Big|\mathbb{P}[N(T)\ge T\slash \beta]-p'\Big|\le \epsilon.
\end{split}
\end{equation}
Since $p>p'$ (note that $\sigma^2>0$), we can choose $\epsilon=\frac{p-p'}{2}>0$ and conclude that for any sufficiently large $T>T_0$
\[
\begin{split}
&\mathbb{P}[\tau(\mathbf{L})\ge T]-\mathbb{P}\Big[N(T)\ge T\slash \beta\Big]\\
\ge &\mathbb{P}\Big[\sum_{n=1}^{N(T)}\sum_j X^{(n)}_{ij}\ge T\Big]-\mathbb{P}\Big[N(T)\ge T\slash \beta\Big]\\
\ge &p-\epsilon-(p'+\epsilon)=0,
\end{split}
\]
where the first inequality is due to \eqref{eq:claim1} and the second inequality is due to \eqref{eq:claim1-2}.
\end{proof}

Next we evaluate the probability that there are at least $m$ coflow arrivals in an overflow frame.

\begin{claim}\label{lm:stochastic-domin1}
Given that an overflow occurs in a frame, the probability that there are at least $m$ coflow arrivals in this frame is upper bounded by $\mathbb{P}[N(T)\ge m|N(T)\ge T\slash\beta]$ when $T$ is sufficiently large, i.e.,
\[
\mathbb{P}\Big[N(T)\ge m\Big|\tau(\mathbf{L})\ge T\Big]\le \mathbb{P}\Big[N(T)\ge m\Big|N(T)\ge T\slash\beta\Big].
\]
\end{claim}
\begin{proof}
If $m\le T\slash \beta$, then $\mathbb{P}[N(T)\ge m|N(T)\ge T\slash\beta]=1$ and the upper bound naturally holds.

If $m>T\slash\beta$, then
\[
\small
\begin{split}
\mathbb{P}\Big[N(T)\ge m\Big|N(T)\ge \frac{T}{\beta}\Big]&=\frac{\mathbb{P}[N(T)\ge m,N(T)\ge \frac{T}{\beta}]}{\mathbb{P}[N(T)\ge \frac{T}{\beta}]}\\
&=\frac{\mathbb{P}[N(T)\ge m]}{\mathbb{P}[N(T)\ge \frac{T}{\beta}]}\\
&\ge \frac{\mathbb{P}[N(T)\ge m, \tau(\mathbf{L})\ge T]}{\mathbb{P}[\tau(\mathbf{L})\ge T]}\\
&= \mathbb{P}[N(T)\ge m|\tau(\mathbf{L})\ge T],
\end{split}
\]
where the first equality is due to the rule of conditional probability, the second equality holds because $\mathbb{P}[N(T)\ge m,N(T)\ge \frac{T}{\beta}] = \mathbb{P}[N(T)\ge m]$ when $m>T\slash\beta$, and the third inequality is due to Claim \ref{lm:overflow-lower} and the fact that $\mathbb{P}[N(T)\ge m]\ge \mathbb{P}[N(T)\ge m, \tau(\mathbf{L})\ge T]$.
\end{proof}

\begin{claim}\label{lm:poisson-dist}
If $N(T)$ is a Poisson random variable, then
$\mathbb{P}[N(T)\ge m+r|N(T)\ge r]\le \mathbb{P}[N(T)\ge m]$.
\end{claim}
\noindent This claim was proved in Appendix B of \cite{Modiano-batching}.

The above claims imply that when $T$ is sufficiently large
\[
\small
\begin{split}
\mathbb{P}\Big[N(T)\ge m\Big|\tau(\mathbf{L})\ge T\Big]&\le \mathbb{P}\Big[N(T)\ge m\Big|N(T)\ge \frac{T}{\beta}\Big]\\
&\le \mathbb{P}\Big[N(T)\ge m-\frac{T}{\beta}\Big]\\
&=\mathbb{P}\Big[N(T)+\frac{T}{\beta}\ge m\Big],
\end{split}
\]
where the first inequality is due to Claim \ref{lm:stochastic-domin1} and the second inequality is due to Claim \ref{lm:poisson-dist}. The above inequalities imply that given an overflow occurs in a frame, the number of coflow arrivals in this frame is stochastically dominated by $N(T)+T\slash \beta$ when $T$ is sufficiently large. This completes the proof of Lemma \ref{lm:non-conforming-bound}.
\end{proof}
With Lemma \ref{lm:non-conforming-bound}, we can find an upper bound for the long-term fraction of non-conforming coflows.
\begin{lemma}\label{lm:non-conforming-fraction}
If the overflow probability is $\delta$, the long-term fraction of non-conforming coflows among all coflows is upper bounded by $\eta\le \frac{2\delta}{\rho}$ when the frame size $T$ is sufficiently large.
\end{lemma}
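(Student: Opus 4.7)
The plan is to bound $\eta$ by the product of the overflow probability and the (conditional) expected number of coflows per overflow frame, normalized by the expected number of coflows per frame overall. The key observation is that non-conforming coflows only exist in frames where an overflow occurs, so that $A_{non}(k) \le A(k)\,\mathbf{1}\{\mathcal{O}_k\}$ where $\mathcal{O}_k$ is the event that the $k$-th frame is an overflow frame. Since successive frames are i.i.d.\ (coflow arrivals are Poisson and batch sizes are i.i.d.), one can invoke the strong law of large numbers on both the numerator and the denominator of the defining ratio for $\eta$.

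First I would handle the denominator. Under Poisson arrivals with rate $\lambda$, $A(k)$ has mean $\lambda T$, so $\frac{1}{K}\sum_{k=1}^K A(k)\to \lambda T$ almost surely by the SLLN. For the numerator, I would write
\begin{equation*}
\frac{1}{K}\sum_{k=1}^K A_{non}(k) \le \frac{1}{K}\sum_{k=1}^K A(k)\,\mathbf{1}\{\mathcal{O}_k\} \xrightarrow{\text{a.s.}} \mathbb{E}\bigl[A(1)\,\mathbf{1}\{\mathcal{O}_1\}\bigr] = \mathbb{P}[\mathcal{O}_1]\cdot\mathbb{E}\bigl[A(1)\,\big|\,\mathcal{O}_1\bigr] = \delta\cdot\mathbb{E}\bigl[A(1)\,\big|\,\mathcal{O}_1\bigr].
\end{equation*}
The conditional expectation is exactly where Lemma \ref{lm:non-conforming-bound} enters: for $T$ sufficiently large, the number of coflow arrivals in an overflow frame is stochastically dominated by $N(T)+T/\beta$, so $\mathbb{E}[A(1)\mid \mathcal{O}_1]\le \mathbb{E}[N(T)]+T/\beta = \lambda T + T/\beta$.

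Combining the two pieces gives
\begin{equation*}
\eta \;\le\; \frac{\delta\bigl(\lambda T + T/\beta\bigr)}{\lambda T} \;=\; \delta\Bigl(1+\frac{1}{\lambda\beta}\Bigr) \;=\; \delta\cdot\frac{1+\rho}{\rho}.
\end{equation*}
Since $\rho<1$, we have $1+\rho<2$, hence $\eta \le 2\delta/\rho$, as claimed. I do not anticipate any substantial obstacle beyond being careful that the almost-sure convergence in the numerator and denominator can be combined into an almost-sure bound on the ratio, which is immediate because the denominator limit $\lambda T$ is a positive deterministic constant. The only non-trivial input is Lemma \ref{lm:non-conforming-bound}, which supplies the first moment bound on the conditional number of arrivals in an overflow frame; everything else is bookkeeping with the SLLN and Poisson statistics.
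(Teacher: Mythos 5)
Your proof is correct and follows essentially the same route as the paper: overestimate $A_{non}(k)$ by counting every coflow in an overflow frame as non-conforming, bound the conditional expected number of arrivals per overflow frame by $\lambda T + T/\beta$ via Lemma \ref{lm:non-conforming-bound}, and normalize by $\lambda T$ to get $\delta(1+1/\rho)\le 2\delta/\rho$. Your explicit invocation of the SLLN on the i.i.d.\ frames merely formalizes the limiting step that the paper states informally; the substance is identical.
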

\begin{proof}
According to Lemma \ref{lm:non-conforming-bound}, the expected number of non-conforming coflows in an overflow frame is at most $\lambda T+T\slash\beta$ (note that this is a loose bound since we treat all the coflows in an overflow frame as non-conforming coflows). As a result, the long-term fraction of non-conforming coflows is
\[
\begin{split}
\eta=\lim_{K\rightarrow\infty}\frac{\sum_{k=1}^K A_{non}(k)}{\sum_{k=1}^K A(k)}&\le \lim_{K\rightarrow\infty}\frac{\delta K (\lambda T+T\slash\beta)}{K\lambda T}\\
&=\frac{\delta (\lambda T+T\slash\beta)}{\lambda T}.
\end{split}
\]
Note that the inequality holds because the number of overflow frames is $\delta K$ as $K\rightarrow\infty$ and the average number of non-conforming coflows in each overflow frame is at most $\lambda T+T\slash\beta$. Noticing that $T\ge 1$ and $\rho=\lambda\beta<1$, we have
\[
\eta\le \frac{\delta(\lambda T+T\slash\beta)}{\lambda T}=\delta(1+\frac{T}{\rho T})\le \delta(\frac{1}{\rho}+\frac{1}{\rho})=\frac{2\delta}{\rho},
\]
which completes the proof.
\end{proof}

\vspace{1mm}

\noindent \textbf{Step 3: Determine delay in the separate FIFO queue.}

\vspace{1mm}

The third step is to find an upper bound for the average delay experienced by non-conforming coflows in the separate FIFO queue. Note that Lemma \ref{lm:exp-decay} shows that whenever $\rho<1$, the overflow probability $\delta$ can be made arbitrarily small by setting the frame size $T$ as in \eqref{eq:TT2}. In particular, we can choose the frame size to be $T=\lceil\frac{\log (2N\slash \delta)}{\gamma}\rceil=O(\log N^3)=O(\log N)$ to achieve the overflow probability $\delta=O(\frac{1}{N^2})$. Under such a choice of $T$, we can prove the following lemma.

\begin{lemma}\label{lm:delay-fifo}
Under a proper choice of $T$, the FIFO queue holding non-conforming coflows is stable whenever $\rho<1$ and the average delay experienced by non-conforming coflows in the FIFO queue is $O(NT^3)$.
\end{lemma}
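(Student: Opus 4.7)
The plan is to model the separate FIFO queue as a discrete-time batch-arrival single-server queue in the frame timescale, where the service rate is exactly one unit per frame (the single slot per frame devoted to FIFO service), each non-conforming coflow has service time equal to its clearance time $\tau$ (in frames), and fresh work arrives only in overflow frames. Let $W(k)$ denote the total service work entering the queue during frame $k$. An overflow occurs with probability $\delta$, and by Lemma~\ref{lm:non-conforming-bound} the batch of coflows injected in such a frame is stochastically dominated by $N(T)+T/\beta$ with $N(T)\sim\text{Poisson}(\lambda T)$.

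I would first verify stability. Under the choice $T=\lceil\log(2N/\delta)/\gamma\rceil$ with $\delta=O(1/N^2)$ (so $T=O(\log N)$), Lemma~\ref{lm:non-conforming-fraction} gives $\eta\le 2\delta/\rho=O(1/N^2)$, while Lemma~\ref{thm:ct-light} gives $\mathbb{E}[\tau]=O(\log N)$. Combining these, $\mathbb{E}[W]\le \delta\cdot\mathbb{E}[N(T)+T/\beta]\cdot\mathbb{E}[\tau]=O(\delta T\log N)=o(1)$, strictly below the service rate $1$; hence the queue is positive recurrent.

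Next I would bound the average waiting time via a Kingman-type inequality for the Lindley recursion $U(k+1)=\max(U(k)+W(k)-1,0)$, which, upon squaring both sides and taking stationary expectations, yields $\mathbb{E}[U]\le \mathbb{E}[W^2]/(2(1-\mathbb{E}[W]))+O(1)$. For $\mathbb{E}[W^2]$, I would use the deterministic worst-case bound $\tau\le N\beta$ so that in an overflow frame $\sum_n\tau_n\le (N(T)+T/\beta)\cdot N\beta$. By property (P1) of stochastic dominance, $\mathbb{E}[B^2\mid\text{overflow}]\le\mathbb{E}[(N(T)+T/\beta)^2]=O(T^2)$, hence $\mathbb{E}[W^2]\le \delta\cdot N^2\beta^2\cdot O(T^2)=O(T^2)$ since $\delta=O(1/N^2)$. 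Kingman therefore delivers an average wait of $O(T^2)$ frames. Adding the worst-case per-coflow service time of $N\beta=O(N)$ frames and converting the total to slots via the factor $T$ gives delay $T\cdot(O(T^2)+O(N))=O(T^3+NT)=O(NT^3)$, as claimed.

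The main obstacle will be the batch structure: conditioning on an overflow event couples the injected batch size with the individual clearance times, precluding a naive Wald-type decomposition of $\mathbb{E}[W^2]$. Using the worst-case per-coflow bound $\tau\le N\beta$ sidesteps this coupling at the price of a loose polynomial dependence on $N$, but this loss is harmless: substituting $O(NT^3)$ back into \eqref{eq:decom} contributes only $\eta\cdot O(NT^3)=O(T^3/N)=o(\log N)$ to the overall coflow-level delay, preserving the $O(\log N)$ scaling targeted by Theorem~\ref{thm:cab-log}.
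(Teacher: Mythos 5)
Your overall architecture---viewing the FIFO queue on the frame timescale as a single-server queue with Bernoulli($\delta$) batch arrivals, establishing stability via $\mathbb{E}[W]<1$, and bounding the wait through a Lindley/Kingman second-moment argument---is essentially the paper's approach, which treats each overflow batch as a single ``customer'' in a discrete-time GI/GI/1 queue and applies the exact discrete-time waiting-time formula. The difference lies in how the moments of the injected work are controlled, and that is where your argument breaks.

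The step that fails is the ``deterministic worst-case bound $\tau\le N\beta$.'' No such bound exists: $\beta$ is defined through the \emph{means} $\beta_{ij}=\mathbb{E}[X_{ij}]$, so $N\beta$ only bounds $\mathbb{E}[\tau(\mathbf{X})]$ (via $\mathbb{E}[\tau]\le\mathbb{E}[\sum_{ij}X_{ij}]\le N\beta$), not the random variable $\tau$ itself. For light-tailed but unbounded flow sizes (e.g., the geometric distribution the paper works with), the clearance time of a coflow is unbounded, so you cannot write $\sum_n\tau_n\le(N(T)+T/\beta)\cdot N\beta$ almost surely; your bound on $\mathbb{E}[W^2]$ and your ``worst-case per-coflow service time of $N\beta$ frames'' both collapse. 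Moreover, this is precisely the coupling you claim to sidestep but cannot: conditioning on an overflow biases the flow sizes upward, so even your first-moment step $\mathbb{E}[W]\le\delta\,\mathbb{E}[N(T)+T/\beta]\,\mathbb{E}[\tau]$ uses the unconditional $\mathbb{E}[\tau]=O(\log N)$ where a conditional expectation is needed. The paper resolves both issues with a dedicated stochastic-dominance result (Lemma~\ref{lm:non-conforming-batch}): conditioned on overflow, $\sum_j\tilde X_{ij}$ is stochastically dominated by $\sum_j X_{ij}+T$, which together with the dominance of the batch count by $N(T)+T/\beta$ yields $\mathbb{E}[U]=\Theta(NT^2)$ and $\mathbb{E}[U^2]=\Theta(N^2T^4)$; substituting into the waiting-time formula gives $O(NT^2)$ frames, i.e., $O(NT^3)$ slots. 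To repair your proof you would need to replace the nonexistent deterministic bound with this (or an equivalent) conditional stochastic-dominance argument; a pointwise truncation of $\tau$ is not available here.
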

\begin{proof}
Note that non-conforming coflows are placed in a discrete-time FIFO queue and are served one at a time. At the end of each frame, a batch of non-conforming coflows arrive to this queue, with probability $\delta$.  Since non-conforming coflows arrive to the FIFO queue in batches, the arrivals of non-conforming coflows are dependent. To circumvent this dependence, we notice that the arrivals of different batches of non-conforming coflows are independent: in each frame, there is a batch arrival with probability $\delta$, independent of any other frames. As a result, we overestimate the delay of any \emph{individual} non-conforming coflow by the delay experienced by the \emph{entire batch} of non-conforming coflows (i.e., the time between the arrival of the entire batch and the completion of all the non-conforming coflows in this batch) plus $T$ slots (the size of the frame in which the non-conforming coflow arrive).

As an overestimate, all the coflows that arrive in an overflow frame are treated as non-conforming coflows. Let $M$ be the total number of non-conforming coflows in an overflow frame, and denote by $\mathbf{\tilde{X}}^{(n)}=(\tilde{X}_{ij}^{(n)})$ the traffic of the $n$-th non-conforming coflow in this overflow frame. Let $\mathbf{\tilde{L}}=\sum_{n=1}^{M}\mathbf{\tilde{X}}^{(n)}$ be the aggregate traffic matrix associated with these non-conforming coflows. It follows that the service time for the entire batch of non-conforming coflows is $U = \tau(\mathbf{\tilde{L}})$ (measured in the number of frames since only one slot per frame is used to serve non-conforming coflows). Clearly, the service times for different batches of non-conforming coflows are independent. As a result, if the entire batch of non-conforming coflows is treated as a ``customer", the FIFO queue is a discrete-time GI/GI/1 queue with Bernoulli arrivals (of rate $\delta$ per frame) and general service time $U$. The average waiting time (measured in the number of frames)  in such a system can be exactly characterized \cite{discrete-queue}:
\begin{equation}\label{eq:gg1}
\text{Delay(FIFO queue)} = \frac{\delta\mathbb{E}[U^2]-\delta\mathbb{E}[U]}{2(1-\delta\mathbb{E}[U])}+\mathbb{E}[U].
\end{equation}
Now we evaluate $\mathbb{E}[U]$ and $\mathbb{E}[U^2]$.
Without loss of generality, we assume $\max_i \sum_j \tilde{L}_{ij}\ge \max_j \sum_i \tilde{L}_{ij}$ so that $U=\tau(\mathbf{\tilde{L}})=\max_i \sum_j \tilde{L}_{ij}$ (this simplification does not influence the scaling as $N\rightarrow\infty$).
Note that $\tilde{L}_{ij}= \sum_{n=1}^{M}\tilde{X}_{ij}^{(n)}$. Thus, we have
\[
U = \tau(\mathbf{\tilde{L}})=\max_i \sum_{n=1}^{M} \sum_j \tilde{X}_{ij}^{(n)}.
\]
By Lemma \ref{lm:non-conforming-bound}, $M$ is stochastically dominated by $N(T)+T\slash\beta$ where $N(T)$ is a Poisson random variable with rate $\lambda T$. By Lemma \ref{lm:non-conforming-batch} (see Appendix \ref{app:overflow-size}), $\sum_j \tilde{X}_{ij}^{(n)}$ is stochastically dominated by $\sum_j X_{ij}+T$ where $(X_{ij})$ is the traffic of an arbitrary coflow in an arbitrary frame. As a result, we have
\begin{equation}\label{eq:u-ex}
\begin{split}
\mathbb{E}[U]&\le \sum_i \mathbb{E}[M]\mathbb{E}\Big[\sum_j \tilde{X}_{ij}^{(n)}\Big]\\
&\le N (\lambda T+T\slash \beta)(\beta+T)=\Theta(NT^2),
\end{split}
\end{equation}
where the second inequality is due to (P1) of stochastic dominance.
At the same time, we have
\begin{equation}\label{eq:u}
\small
\begin{split}
\text{Var}(U)&\le \sum_i \text{Var}(\sum_{n=1}^{M} \sum_j \tilde{X}_{ij}^{(n)})\\
&= \sum_i \Bigg[\mathbb{E}[M]\text{Var}\Big(\sum_j \tilde{X}_{ij}^{(n)}\Big)+\mathbb{E}^2\Big[\sum_j \tilde{X}_{ij}^{(n)}\Big]\text{Var}(M)\Bigg]\\
& \le \sum_i \Bigg[\mathbb{E}[M]\mathbb{E}\big[(\sum_j \tilde{X}_{ij}^{(n)})^2\big]+\mathbb{E}^2\Big[\sum_j \tilde{X}_{ij}^{(n)}\Big]\mathbb{E}[M^2]\Bigg].
\end{split}
\end{equation}
By (P1) of stochastic dominance, we have
\begin{equation}\label{eq:u2}
\small
\begin{split}
&\mathbb{E}\Big[(\sum_j \tilde{X}_{ij}^{(n)})^2\Big]\le \mathbb{E}\Big[(\sum_j X_{ij}+T)^2\Big] = \sigma^2 + (\beta+T)^2,\\
&\mathbb{E}[M^2]\le \mathbb{E}[(N(T)+T\slash\beta)^2] = \lambda T + (\lambda T+T\slash \beta)^2.
\end{split}
\end{equation}
Taking \eqref{eq:u2} into \eqref{eq:u}, we have
\[
\small
\begin{split}
\text{Var}(U) &\le N\Bigg\{(\lambda T+\frac{T}{\beta})\sigma^2+(\beta+T)^2\Big[2\lambda T+\frac{T}{\beta}+(\lambda T+\frac{T}{\beta})^2\Big]\Bigg\}\\
& = \Theta(NT^4),
\end{split}
\]
which implies that
\begin{equation}\label{eq:u-va}
\mathbb{E}[U^2] = \text{Var}(U)+\mathbb{E}^2[U] = \Theta(N^2T^4).
\end{equation}
Taking \eqref{eq:u-ex} and \eqref{eq:u-va} into \eqref{eq:gg1}, we have
\[
\text{Delay(FIFO queue)}\le O(T^4)+O(NT^2)=O(NT^2).
\]
Note that the above delay is measured in the number of \emph{frames}, which implies that the expected delay  (measured in the number of \emph{timeslots}) experienced by a non-conforming coflow in the FIFO queue is $O(NT^3)$.

Finally, it is worth mentioning that the GI/GI/1 queue is stable whenever $\delta\mathbb{E}[U]<1$. Due to \eqref{eq:u-ex}, this is true whenever the following condition is satisfied:
\begin{equation}\label{eq:stable}
\delta N (\lambda T+T\slash \beta)(\beta+T)<1.
\end{equation}
Since $\delta=O(\frac{1}{N^2})$ and $T=O(\log N)$, the left-hand side of \eqref{eq:stable} can be made arbitrarily small (i.e., the queue is stable) for any $N\ge 1$ under a suitably small $\delta$. We will further discuss the exact value of $\delta$ in Section \ref{sec:discussion}.
\end{proof}

\noindent \textbf{Remark.} If $\rho\ge 1$, then Lemma \ref{lm:exp-decay} does not hold and the overflow probability $\delta$ cannot be made arbitrarily small regardless of the choice of $T$. In this case, the FIFO queue holding non-conforming coflows is unstable.

\vspace{1mm}

\noindent \textbf{Step 4: Putting it all together.}

\vspace{1mm}

Finally, we can evaluate the average coflow-level delay experience by both conforming and non-conforming coflows. By Lemma \ref{lm:non-conforming-fraction}, the fraction of non-conforming coflows is at most $\eta\le \frac{2\delta}{\rho}$. If $\rho<1$, Lemma \ref{lm:exp-decay} shows that we can choose the frame size to be $T=\lceil\frac{\log (2N\slash \delta)}{\gamma}\rceil=O(\log N^3)=O(\log N)$ to achieve the overflow probability $\delta=O(\frac{1}{N^2})$. Under such a choice of $T$, Lemma \ref{lm:delay-fifo} shows that $\text{Delay(FIFO queue)}=O(NT^3)$. Taking the values of $T$, $\eta$ and $\text{Delay(FIFO queue)}$ into \eqref{eq:decom}, we can conclude that the average coflow-level delay under the CAB policy is
\begin{equation}\label{eq:delay}
\begin{split}
\mathbb{E}[W]&\le (1-\frac{2\delta}{\rho})2T+\frac{2\delta}{\rho}[T+\text{Delay(FIFO queue)}]\\
& \le 2T+\frac{2\delta }{\rho}\text{Delay(FIFO queue)}\\
&= O(T)+O(\delta NT^3)\\
&=O(T)\\
&=O(\log N).
\end{split}
\end{equation}
This completes the proof to Theorem \ref{thm:cab-log}.
\subsection{Proof to Corollary \ref{co:cab-log-worst}}
The proof to Theorem \ref{thm:cab-log} shows that the coflow-level delay experienced by any conforming coflow is no greater than $2T=O(\log N)$ and the fraction of conforming coflows is more than $1-\frac{2\delta}{\rho}=1-O(\frac{1}{N^2})$. As a result, the CAB policy also ensures that the $O(\log N)$ delay is achievable for an arbitrary coflow with high probability $1-O(\frac{1}{N^2})$.

\subsection{Heuristic Improvement}\label{sec:heu}
In this section, we propose several heuristics that improve the practical performance of the CAB policy up to some constant factor (as compared to $N$). 

\vspace{1mm}

\noindent \textbf{Shortest-Clearance-Time-First (SCTF) Rule.} This rule simply means that we should first clear the coflow with the smallest clearance time. It is inspired by the optimality of the Shortest-Processing-Time rule in traditional machine scheduling literature. The SCTF rule can be leveraged when we clear conforming coflows. We first order these conforming coflows according to their clearance time. In a certain slot, suppose that queue $(i,j)$ gets scheduled. Instead of transmitting a packet in queue $(i,j)$ according to FIFO, we select to transmit a packet of the coflow  with the shortest clearance time that also has a remaining packet in queue $(i,j)$.

\vspace{1mm}

\noindent \textbf{Dynamic Frame Sizing.} This heuristic was suggested in \cite{{Modiano-batching}}. In each frame, if all the conforming coflows from the previous frame have been cleared and there are no non-conforming coflows in the system, the system starts a new frame immediately (rather than being idle for the remainder of the frame).

\subsection{Discussions}\label{sec:discussion}
\noindent \textbf{Robustness to assumptions.} Note that the Poisson assumption is not essential in the proof; a similar proof can be constructed for any arrival process such that the number of arrivals during $T$ slots has a light-tailed distribution (referred to as a light-tailed arrival process). On the other hand, if the batch size or the arrival process is not light-tailed, then the overflow probability no longer decreases exponentially with the frame size $T$ and the logarithmic bound does not hold.

\vspace{2mm}

\noindent \textbf{Joint scaling as $\rho\rightarrow 1$ and $N\rightarrow\infty$.} It is also interesting to study the joint scaling of the coflow-level delay achieved by the CAB policy as $N\rightarrow\infty$ and $\rho\rightarrow 1$. Since the majority of coflows experience a delay no greater than $O(T)$, we focus on the scaling of $T$ as $\rho\rightarrow 1$. The second-order Taylor series of $f(s)$ (see equation \eqref{eq:fs}) around $s=0$ is
\[
f(s)=(1-\rho)s-\lambda(\sigma^2+\beta^2)s^2\slash 2+O(s^3).
\]
When $\rho\rightarrow 1$, we can set $s=s^*=\frac{1-\rho}{\lambda(\sigma^2+\beta^2)}\rightarrow 0$ so that
\[
f(s^*)=\frac{(1-\rho)^2}{2\lambda(\sigma^2+\beta^2)}+O((1-\rho)^3)>0.
\]
Define $\gamma \triangleq f(s^*)=\Theta((1-\rho)^2)$. It follows from \eqref{eq:TT2} that
\[
T=\Big\lceil\frac{\log (2N\slash \delta)}{\gamma}\Big\rceil=O\Big(\frac{\log (2N\slash \delta)}{(1-\rho)^2}\Big)=O\Big(\frac{\log N}{(1-\rho)^2}\Big)
\]
as $\rho\rightarrow 1$ and $N\rightarrow\infty$. Compared with the $O(\frac{N\log N}{1-\rho})$ scaling under randomized or periodic scheduling, the CAB policy has a much better dependence on $N$ but becomes  more sensitive to $\rho$.

\vspace{2mm}

\noindent \textbf{Computational Complexity.} The computational complexity of the CAB policy can be analyzed in a similar way to the original batching policy \cite{Modiano-batching}. The computational complexity is $O(N^{1.5}\log N)$ per slot.

\vspace{2mm}

\noindent \textbf{Choosing Parameters.} In the CAB policy, the frame size is set to be $T=\lceil\frac{\log (2N\slash \delta)}{\gamma}\rceil$. As a result, we need to choose the parameters $\delta$ and $\gamma$.

We first fix $\gamma$ and discuss how to determine the value of $\delta$ (the overflow probability). The requirements on $\delta$ are:
\begin{itemize}[itemsep=-1mm,topsep=1mm]
\item[(1)]{The Left-Hand Side (LHS) of \eqref{eq:stable} needs to be made below 1 such that the system is stable.}
\item[(2)]{$\delta=O(\frac{1}{N^2})$ such that the $O(\log N)$ delay is achievable.}
\end{itemize}
Note that when $N\beta\ge 1$ (which is true for relatively large $N$), the LHS of \eqref{eq:stable} can be upper bounded by
\[
\small
LHS\le \delta N (\lambda T+T\slash \beta)(\beta+N\beta T)=\delta NT (\rho+1)(1+NT).
\]
Then we can obtain  $\delta$ by solving the following system of equations:
\begin{equation}\label{eq:para}
\small
\begin{cases}
\delta NT (\rho+1)(1+NT)=\frac{1}{2}\cr
T=\lceil\frac{\log (2N\slash \delta)}{\gamma}\rceil
\end{cases}
\end{equation}
Clearly, the solution to \eqref{eq:para} finds $\delta = O(\frac{1}{N^2 T^2})\le O(\frac{1}{N^2})$ such that the second requirement is met; the first requirement is met due to the fact that $\delta NT (\rho+1)(1+NT)$ is greater than the LHS of \eqref{eq:stable}. Note that the system utilization $\rho$ can be measured, so equations \eqref{eq:para} can solved iteratively for a given $\gamma$.

Now we discuss how to estimate the value of $\gamma$. To obtain a smaller $T$, it is desirable to have a larger $\gamma$. If we know the coflow arrival rate and the distribution of batch sizes, we can compute $\gamma$ by maximizing $f(s)$ shown in \eqref{eq:fs}, i.e.,
\begin{equation}\label{eq:gamma}
\gamma = \max_{s\ge 0}\lambda[1-M_{B}(s)]+s.
\end{equation}
If the system has no information about $\lambda$ or batch size distributions, $\gamma$ can be empirically tuned. We can first pick some (small) arbitrary value of $\gamma$ and obtain $\delta$ and $T$ by solving equations \eqref{eq:para}. The system then measures the actual overflow probability $\delta'$ under such a frame size. If $\delta'>\delta$, the value of $\gamma$ is reduced by some step size to increase the frame size $T$; otherwise the value of $\gamma$ is increased by some step size. The above procedure proceeds until $\delta'\approx\delta$, and a good value of $\gamma$ is found.

\section{Simulation Results}\label{sec:simulation}
In this section, we numerically evaluate the coflow-level performance of the  CAB policy.
\subsection{Simulation Setup}
In our simulations, coflows arrive to the system according to a Poisson process with rate $\lambda = 0.3$ (per slot). The batch sizes ($X_{ij}$'s) follow a geometric distribution with mean $\frac{\beta}{N}$ where $\beta=2.5$ (measured in the number of packets). Hence, the offered load is $\rho=0.75$. The simulation is run for a sufficiently long time ($10^6$ slots) such that the steady state is reached. The parameter $\gamma$ is obtained by solving equation \eqref{eq:gamma} offline; the parameters $\delta$ and $T$ are obtained by solving equation \eqref{eq:para}.
\subsection{Scaling with $\mathlarger{\mathlarger{N\rightarrow\infty}}$}
First, we evaluate the scaling of coflow-level delay as $N\rightarrow\infty$. The following schemes are compared:
\begin{itemize}[itemsep=1mm]
\item[(1)] CAB policy. Note that two heuristics mentioned in Section \ref{sec:heu} are leveraged.

\item[(2)] Randomized scheduling as described in Section \ref{sec:agnostic}.

\item[(3)] Max-Weight Matching (MWM) scheduling: the schedule in slot $t$ is the maximum weight matching of $\mathbf{Q}(t)$ where $\mathbf{Q}(t)=(Q_{ij}(t))$ is the queue length matrix in slot $t$.

\end{itemize}

\begin{figure}[]
\begin{center}
\includegraphics[width=2.5in]{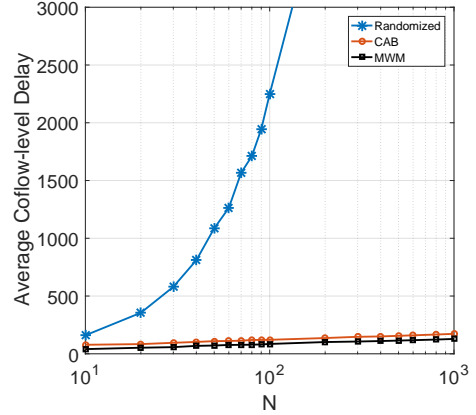}
\caption{Average coflow-level delay under different scheduling policies. Note that the horizontal axis is in the log scale.}
\label{fig:basic} \vspace{-3mm}
\end{center}
\end{figure}

\begin{figure}[]
\begin{center}
\includegraphics[width=2.8in]{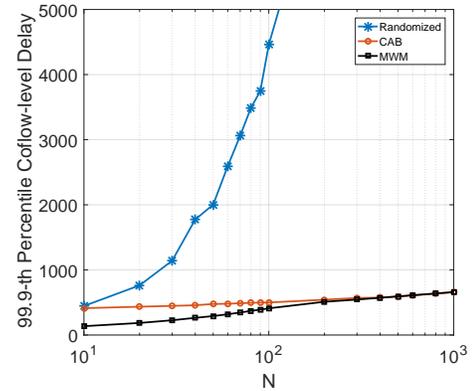}
\caption{99.9-th percentile coflow-level delay under different scheduling policies. Note that the horizontal axis is in the log scale.}
\label{fig:tail}\vspace{-3mm}
\end{center}
\end{figure}

Figure \ref{fig:basic} shows the comparison of these schemes with respect to the average coflow-level delay, where the horizontal axis is on a logarithmic scale. As the theoretical bound suggests, the CAB policy achieves the logarithmic scaling as $N\rightarrow\infty$ (i.e., a straight line in the figure). By comparison, the average coflow-level delay achieved by the randomized scheme grows much faster with $N$. Moreover, it can be observed that the CAB policy outperforms the randomized scheme even for very small $N$ (e.g.,  $N=40$). 

Another interesting observation is that the MWM policy has an exceptional coflow-level performance. It is observed that the MWM policy empirically achieves the optimal logarithmic coflow-level delay scaling as $N\rightarrow\infty$. The MWM policy also slightly outperforms the CAB policy by some constant factor. Unfortunately, the coflow-level delay analysis of the MWM policy is very challenging and left for future work.

In the above, the \emph{average} coflow-level delay is evaluated but in many cases we are also interested in \emph{tail latency}. Note that the CAB policy guarantees that the $O(\log N)$ coflow-level delay is achievable with high probability $1-O(1\slash N^2)$ (see Corollary \ref{co:cab-log-worst}), so the coflow-level delay tail under the CAB policy also scales as $O(\log N)$ when $N$ is relatively large. This is illustrated in Figure \ref{fig:tail}, where the 99.9-percentile coflow-level delay is evaluated. As expected, the CAB policy achieves $O(\log N)$ scaling for the coflow-level delay tail, significantly outperforming the randomized scheme. The MWM algorithm has a similar delay tail as the CAB policy when $N$ is relatively large.
\subsection{Coflow-level Delay Dilation}
Next, we compare the coflow-level delay with the packet-level delay under the randomized policy and the CAB policy. In particular, we are interested in the \emph{coflow-level delay dilation factor} which is the ratio between the average coflow-level delay and the average packet-level delay. As is illustrated in Figure \ref{fig:dilation}, the randomized policy has a coflow-level delay dilation factor of $O(\log N)$; this observation empirically validates the tightness of the $O(N\log N)$ bound shown in Theorem \ref{thm:random-delay} (note that the average packet delay achieved by the randomized policy is exactly $\Theta(N)$ under our simulation environment). By comparison, the delay dilation factor for the CAB policy remains at a constant level as $N\rightarrow\infty$, which shows the benefits of ``coflow-awareness".

\begin{figure}[]
\begin{center}
\includegraphics[width=2.5in]{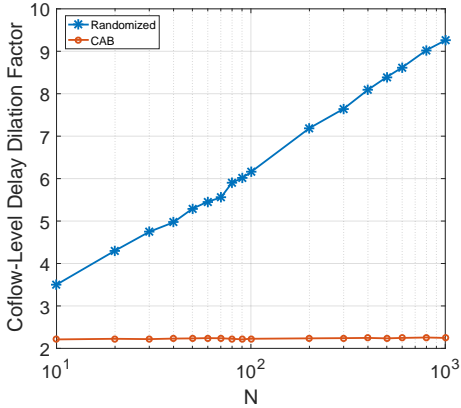}
\caption{Coflow-level delay dilation factor (the ratio between average coflow-level delay and average packet-level delay) under different scheduling policies. Note that the horizontal axis is in the log scale.}
\label{fig:dilation}
\end{center}
\end{figure}
\subsection{Scaling with $\mathlarger{\mathlarger{\rho\rightarrow 1}}$}
Finally, we numerically study the sensitivity of the coflow-level performance under different scheduling policies as the offered load $\rho\rightarrow 1$. This is shown in Figure \ref{fig:rho}. Clearly, the CAB policy is more sensitive to the offered load $\rho$ than the randomized policy. In the heavy-traffic regime, the randomized policy even outperforms the CAB policy. Indeed, the average coflow-level delay achieved by the randomized policy grows as $O(\frac{1}{1-\rho})$ as $\rho\rightarrow 1$ (see Remark 1 below Theorem \ref{thm:random-delay}). By comparison, the CAB policy achieves $O(\frac{1}{(1-\rho)^2})$ average coflow-level delay as $\rho\rightarrow 1$ (see Section \ref{sec:discussion}). As a result, the price for the better scaling with $N$ is the worse dependence on $\rho$.
\begin{figure}[]
\begin{center}
\includegraphics[width=2.5in]{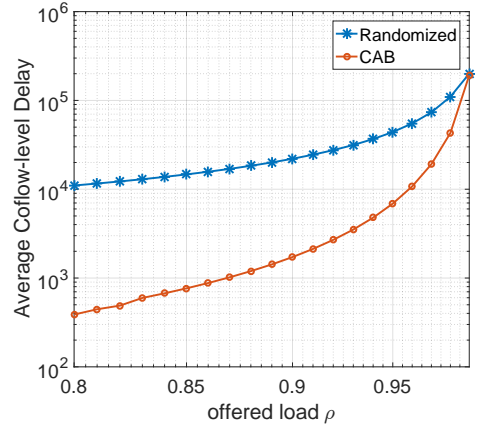}
\caption{Scaling of average coflow-level delay as $\rho\rightarrow 1$ ($N=200$). Note that the vertical axis is in the log scale.}
\label{fig:rho}
\end{center}
\end{figure}
\section{Conclusion and Future Work}\label{sec:conclusion}
In this paper, we investigate the optimal scaling of coflow-level delay in an $N\times N$ input-queued switch as $N\rightarrow\infty$. We develop lower bounds on the coflow-level delay that can be achieved by any scheduling policy. In particular, when flow sizes have light-tailed distributions, the lower bound $O(\log N)$ can be attained by the proposed Coflow-Aware Batching (CAB) policy. Thus, the optimal scaling of coflow-level delay is $O(\log N)$ under light-tailed flow sizes.

Future work includes the design of a throughput-optimal scheduling policy that achieves the best scaling of coflow-level delay under a general coflow arrival process and general flow size distributions. Variations of the Maximum Weight Matching (MWM) algorithm (that leverage coflow-level information) may be a promising direction to investigate since our simulation results show that the MWM algorithm has exceptional coflow-level performance. However, the coflow-level performance analysis of the MWM algorithm may be very challenging. Another interesting direction is to consider \emph{correlated} flow sizes and investigate how the correlation influences the scaling properties. Finally, it is also worth studying the case without prior knowledge on coflows such as coflow sizes and release times of flows (currently we assume that all the flows in a coflow are released simultaneously and coflow sizes are known).

\appendix
\section{Proofs}
\subsection{Proof to Lemma \ref{thm:ct-light}}\label{appendix:ct-light}
We first introduce a few technical lemmas. The first lemma is regarding the asymptotic bound of order statistics \cite{asy-order}.
\begin{lemma}\label{lm:asy-order}
Let $Y_1,\cdots,Y_N$ be i.i.d. $\mathbb{R}^+$-valued random variables whose common CDF $G(y)$ satisfies the following conditions:
\[
G(y) < 1,~\text{ for all } y\ge 0,
\]
and
\[
\lim_{y\rightarrow \infty}\frac{1-G(cy)}{1-G(y)}=0,~\text{ for all }c>1.
\]
Under these conditions, the asymptotics
\[
\mathbb{E}[\max_i Y_i] = m_N (1+O(1))
\]
holds true as $N\rightarrow \infty$,
where
\[
m_N = \inf\Big\{y\ge 0:1-G(y)\le \frac{1}{N}\Big\}.
\]
\end{lemma}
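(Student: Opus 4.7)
\medskip

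\noindent\textbf{Proof proposal.} My plan is to bound $\mathbb{E}[V_N]/m_N$, where $V_N=\max_{i\le N} Y_i$, from above and below by quantities tending to $1$, so the asymptotic $\mathbb{E}[V_N]=m_N(1+o(1))$ follows. The workhorse throughout is the rapid‑variation hypothesis $\lim_{y\to\infty}(1-G(cy))/(1-G(y))=0$ for every $c>1$: intuitively this says the tail $1-G$ shrinks faster than any power, so the bulk of the distribution of $V_N$ concentrates in a multiplicative neighborhood of $m_N$. The proof naturally splits into a tail‑integral upper bound and a simple Markov‑style lower bound on $\mathbb{P}[V_N\ge m_N/c]$.

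\medskip

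\noindent\textbf{Upper bound.} I would start from the identity
\[
\mathbb{E}[V_N]=\int_0^{\infty}\bigl(1-G(y)^N\bigr)\,dy,
\]
split the range at $y=m_N$, bound the first piece trivially by $m_N$, and dominate the tail integrand with the union bound $1-G(y)^N\le N(1-G(y))$. This reduces the task to showing
\[
N\int_{m_N}^{\infty}\!\bigl(1-G(y)\bigr)\,dy \;=\; o(m_N).
\]
Here I would invoke rapid variation to get, for any fixed $c>1$ and any $\varepsilon>0$, the pointwise bound $1-G(cy)\le\varepsilon(1-G(y))$ for all sufficiently large $y$. Iterating this bound along the geometric sequence $m_N,\,cm_N,\,c^2m_N,\dots$ yields a geometric series and gives $\int_{m_N}^\infty(1-G(y))\,dy \le C_{c,\varepsilon}\,m_N(1-G(m_N))$. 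Combined with the defining property $N(1-G(m_N))\le 1$ of $m_N$, this collapses the tail integral to $O(\varepsilon\, m_N)$, hence $\mathbb{E}[V_N]\le m_N(1+o(1))$.

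\medskip

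\noindent\textbf{Lower bound.} For any fixed $c>1$ I would write
\[
\mathbb{E}[V_N]\;\ge\;\frac{m_N}{c}\,\mathbb{P}\!\left[V_N\ge \tfrac{m_N}{c}\right]\;=\;\frac{m_N}{c}\Bigl(1-G(m_N/c)^N\Bigr).
\]
By rapid variation applied with the same constant $c$, $1-G(m_N)\le\varepsilon(1-G(m_N/c))$ eventually, so $1-G(m_N/c)\ge(1/\varepsilon)(1-G(m_N))$. Combined with $1-G(m_N)$ being of order $1/N$, this forces $N(1-G(m_N/c))\to\infty$, and hence $G(m_N/c)^N\to 0$. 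Therefore $\mathbb{E}[V_N]\ge(m_N/c)(1-o(1))$; letting $c\downarrow 1$ yields the matching lower bound, and together with the previous step we obtain the claimed asymptotic.

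\medskip

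\noindent\textbf{Main obstacle.} The delicate step is converting the \emph{pointwise} rapid‑variation condition into the \emph{uniform in $N$} control of the tail integral used in the upper bound. The geometric‑chaining argument sketched above works cleanly when $G$ is continuous, but if $G$ has jumps one must be careful because $m_N$ is defined via an infimum and $1-G(m_N)$ can be strictly less than $1/N$ (or strictly greater, depending on side). I would handle this by passing to a right‑continuous version of $G$ and noting that $m_N\to\infty$ under the hypotheses, so that for $N$ large enough $m_N$ lies in the regime where the rapid‑variation estimate kicks in; the rate $\varepsilon=\varepsilon(N)\to 0$ can then be chosen slowly enough to validate both the geometric‑series bound and the lower‑bound estimate simultaneously.
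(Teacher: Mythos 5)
The paper offers no proof of this lemma to compare against: it is quoted directly as a known result of Lai and Robbins \cite{asy-order}. Your proposal is therefore a self-contained substitute, and its architecture is sound and essentially the standard route to this asymptotic: the tail-integral identity $\mathbb{E}[V_N]=\int_0^\infty\bigl(1-G(y)^N\bigr)\,dy$, the union bound $1-G^N\le N(1-G)$ above $m_N$, geometric chaining of the rapid-variation estimate $1-G(c^k m_N)\le\varepsilon^k\bigl(1-G(m_N)\bigr)$ (valid for large $N$ since $G(y)<1$ for all $y$ forces $m_N\to\infty$), and a one-point Markov lower bound at $m_N/c$ followed by $c\downarrow 1$. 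Note that you prove the stronger statement $\mathbb{E}[\max_i Y_i]=m_N(1+o(1))$, which is clearly what the lemma intends; read literally, ``$1+O(1)$'' would only assert $\Theta(m_N)$.

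Two steps need repair. First, a bookkeeping slip in the upper bound: the first interval $[m_N,cm_N]$ of your chaining carries no factor of $\varepsilon$, so the bound you actually get is $N\int_{m_N}^\infty(1-G)\le \frac{c-1}{1-c\varepsilon}\,m_N\,N\bigl(1-G(m_N)\bigr)\le \frac{c-1}{1-c\varepsilon}\,m_N$, which is not ``$O(\varepsilon m_N)$'' for fixed $c$; you must also send $c\downarrow 1$ (or split the integral at $cm_N$ instead of $m_N$). This is harmless since you take $c\downarrow 1$ anyway. Second, and more substantive: your lower bound leans on ``$1-G(m_N)$ being of order $1/N$,'' which can fail. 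Since any CDF is already right-continuous (so ``passing to a right-continuous version'' buys nothing), the definition of $m_N$ gives only $1-G(m_N)\le 1/N$; if $G$ jumps across the level $1-1/N$, then $1-G(m_N)$ can be arbitrarily smaller than $1/N$, and your inequality $1-G(m_N/c)\ge\varepsilon^{-1}\bigl(1-G(m_N)\bigr)$ no longer forces $N\bigl(1-G(m_N/c)\bigr)\to\infty$. The fix is the left limit plus an intermediate constant: every $y<m_N$ satisfies $1-G(y)>1/N$, so $1-G(m_N^-)\ge 1/N$; writing $b=\sqrt{c}>1$ and applying rapid variation with ratio $b$ at the point $y=m_N/c$ gives, for all large $N$,
\[
\tfrac{1}{N}\le 1-G(m_N/b)\le \varepsilon\bigl(1-G(m_N/c)\bigr),
\]
hence $N\bigl(1-G(m_N/c)\bigr)\ge 1/\varepsilon$ and $G(m_N/c)^N\le e^{-1/\varepsilon}$; since $\varepsilon>0$ is arbitrary, $G(m_N/c)^N\to 0$ for each fixed $c>1$, which restores your estimate $\mathbb{E}[V_N]\ge (m_N/c)(1-o(1))$ and completes the argument exactly as you propose.
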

\noindent The second lemma is an application of Lemma \ref{lm:asy-order} to light-tailed random variables.

\begin{lemma}\label{lm:asy-order-light}
Suppose $Y_1,\cdots,Y_N$ are independent \textbf{light-tailed} random variables with $\mathbb{E}[Y_i^n]=O(1)$ as $N\rightarrow\infty$ for all $n\in \mathbb{Z}^+$. Then $\mathbb{E}[\max_i Y_i] = O(\log N)$ as $N\rightarrow\infty$. 
\end{lemma}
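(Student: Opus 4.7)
The plan is to combine a uniform Chernoff-type tail bound with a union bound, then integrate. I would avoid trying to apply Lemma \ref{lm:asy-order} directly, since that lemma requires i.i.d.\ observations, whereas here the $Y_i$'s need only be independent and may be non-identically distributed.

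First, I would establish a uniform exponential tail bound: there exist constants $s>0$ and $C<\infty$, independent of $i$ and of $N$, such that
\[
\mathbb{P}[Y_i\ge y]\le C e^{-sy},\quad y\ge 0,\ i\in[N].
\]
Light-tailedness of each $Y_i$ guarantees that its MGF is finite on some neighborhood of $0$, so $M_{Y_i}(s)=\sum_{n\ge 0} s^n \mathbb{E}[Y_i^n]/n!$ admits a term-by-term bound from the hypothesis $\mathbb{E}[Y_i^n]\le C_n$, where $C_n$ depends only on $n$ (not on $i$ or $N$). Choosing $s>0$ small enough that $\sum_{n\ge 0} s^n C_n/n!\triangleq C<\infty$ produces an MGF bound $M_{Y_i}(s)\le C$ uniform in $i$ and $N$, and the Chernoff inequality $\mathbb{P}[Y_i\ge y]\le M_{Y_i}(s)e^{-sy}$ then yields the claimed tail bound.

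Next I would apply the union bound and integrate. Independence is not even needed at this step: $\mathbb{P}[\max_i Y_i\ge y]\le \sum_i \mathbb{P}[Y_i\ge y]\le NCe^{-sy}$. Combining with the trivial bound $\mathbb{P}[\max_i Y_i\ge y]\le 1$ and writing
\[
\mathbb{E}[\max_i Y_i]=\int_0^\infty \mathbb{P}[\max_i Y_i\ge y]\,dy\le \int_0^\infty \min\bigl(1,NCe^{-sy}\bigr)\,dy,
\]
I would split the integral at $y^*=\log(NC)/s$. The first piece contributes $y^*$ and the second piece contributes $1/s$, giving $\mathbb{E}[\max_i Y_i]\le \frac{\log(NC)}{s}+\frac{1}{s}=O(\log N)$, as desired.

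The main obstacle is the uniformity in Step 1. The hypothesis $\mathbb{E}[Y_i^n]=O(1)$ controls each Taylor coefficient of $M_{Y_i}$ uniformly, but one still needs to ensure that a single $s>0$ lies inside the radius of convergence for \emph{every} $Y_i$ simultaneously; this is what allows a common Chernoff exponent. In the settings where this lemma is invoked in the paper (namely $Y_i=\sum_j X_{ij}$ with $X_{ij}$'s light-tailed and Assumption (3) of Section \ref{sec:coflow-abs} in force), the uniform-coefficient bound suffices because the series $\sum_n s^n C_n/n!$ has a strictly positive radius of convergence, so small enough $s$ does the job.
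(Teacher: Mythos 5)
Your proof is correct, and it takes a genuinely different and more elementary route than the paper's. The paper also begins with a Chernoff bound, obtaining $1-G_i(y)\le e^{-\gamma_i y}$ for $y>\mu_i$ with $\gamma=\min_i\gamma_i$, but it then converts this tail bound into stochastic dominance of each $Y_i$ by an i.i.d.\ family of shifted exponentials and invokes the Lai--Robbins order-statistics asymptotic (Lemma \ref{lm:asy-order}) to evaluate the expected maximum of the dominating family; that detour through an i.i.d.\ comparison family is precisely how the paper circumvents the identical-distribution requirement of Lemma \ref{lm:asy-order} that you flagged. Your union-bound-plus-tail-integration argument dispenses with both the dominance construction and Lemma \ref{lm:asy-order}, requires no independence whatsoever (the paper's dominance step does, via property (P2)), and yields an explicit bound $\frac{\log(NC)}{s}+\frac{1}{s}$. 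The one delicate point is common to both arguments: a single exponent ($s$ for you, $\gamma$ for the paper) must be valid for every $i$ and bounded away from $0$ as $N\to\infty$. The paper handles this by arguing that the cumulants $\kappa_n$ are $O(1)$ and tacitly assuming the resulting series converges on a fixed neighborhood of $0$; you make the analogous assumption on $\sum_n s^n C_n/n!$ explicitly, which is, if anything, the more honest treatment. (Minor point: your identity $\mathbb{E}[\max_i Y_i]=\int_0^\infty\mathbb{P}[\max_i Y_i\ge y]\,dy$ uses nonnegativity of the $Y_i$'s, which holds in every application in the paper and is implicit in its statement as well.)
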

\begin{proof}
Suppose the CDF of $Y_i$ is $G_i(y)$, and let $\mathbb{E}[Y_i]=\mu_i$. Also let $M_{Y_i}(s)$ be the Moment Generating Function of $Y_i$. Since $Y_i$ is light-tailed, there exists some $s^*>0$ such that $M_{Y_i}(s)<\infty$ for all $0\le s\le s^*$. By Chernoff bound, for any $0\le s\le s^*$ and $y>\mu_i$
\[
1-G_i(y) \le M_{Y_i}(s)e^{-sy} \triangleq e^{-yf_i(s)},
\]
where
\[
f_i(s) = s-\frac{\log M_{Y_i}(s)}{y}.
\]
It is clear that $f_i(0)=0$ and $f_i'(0) = 1-\frac{M_{Y_i}'(0)}{yM_{Y_i}(0)}=1-\frac{\mu_i}{y}>0$ for any $y>\mu_i$. Due to the continuity of $f_i(s)$ around $s=0$, there exists a sufficiently small $\epsilon_i>0$ such that $f_i(\epsilon_i)>0$. Define $\gamma_i = f_i(\epsilon_i)>0$. We have
\[
1-G_i(y)\le e^{-\gamma_iy},~\text{for all } y> \mu_i.
\]
Let $\gamma = \min_i \gamma_i$. Then it follows that for all $i\in[N]$
\[
1-G_i(y)\le e^{-\gamma y},~\text{for all } y> \mu_i.
\]
Consider a sequence of i.i.d. random variables $Y'_1,\cdots,Y'_N$ with shifted exponential distribution $F(y)= 1- e^{-\gamma (y-\mu)}$ for $y>\mu$, where $\mu=\max_i\mu_i$. It is clear that $Y_i$ is stochastically dominated by $Y'_i$ for all $i\in[N]$. By (P2) of stochastic dominance (see Section \ref{sec:math}), we have $\mathbb{E}[\max_i Y_i]\le \mathbb{E}[\max_i Y'_i]$. Thus, it suffices to show $\mathbb{E}[\max_i Y'_i]=O(\log N)$ as $N\rightarrow\infty$. Note that $F(y)$ satisfies the conditions in Lemma \ref{lm:asy-order}. Hence, we have $\mathbb{E}[\max_i Y'_i]=m'_N(1+O(1))$ where
\[
\begin{split}
m'_N &= \inf\Big\{y\ge 0:1-F(y)\le \frac{1}{N}\Big\}=\frac{\log N}{\gamma}+\mu.
\end{split}
\]
Since $\mu=O(1)$, we can conclude that $\mathbb{E}[\max_i Y'_i]=O(\frac{\log N}{\gamma})$ as $N\rightarrow\infty$. Now it remains to show that $\gamma=O(1)$ as $N\rightarrow\infty$.
Taking the Taylor expansion of $f_i(s)$ around $s=0$, we have
\begin{equation}\label{eq:fi}
f_i(s)=\sum_{n=0}^{\infty}\frac{f_i^{(n)}(0)}{n!}s^n= (1-\frac{\mu_i}{y})s-\frac{1}{y}\sum_{n=2}^{\infty}\frac{\kappa_n}{n!}s^n,
\end{equation}
where $\kappa_n$ is the $n$-th cumulant\footnote{The $n$-th cumulant of $Y_i$ is the $n$-th order derivative for the logarithm of the MGF of $Y_i$, evaluated at zero, i.e., $\kappa_n=K^{(n)}(0)$, where $K(s)=\log M_{Y_i}(s)$.} of $Y_i$. Note that $\kappa_n$ is a degree-$n$ polynomial in the first $n$ moments of $Y_i$. Since $\mathbb{E}[Y_i^n]=O(1)$ for all $n\in\mathbb{Z}^+$, we have $\kappa_n=O(1)$ as $N\rightarrow\infty$. As a result, the second term in \eqref{eq:fi} (i.e., $\frac{1}{y}\sum_{n=2}^{\infty}\frac{\kappa_n}{n!}s^n$) is also independent of $N$. Hence, there exists some $\epsilon_i$ independent of $N$ such that $\gamma_i=f_i(\epsilon_i)>0$ and $\gamma_i$ is independent of $N$, which implies that $\gamma=\max_i\gamma_i=O(1)$.
\end{proof}
\vspace{-3mm}

\noindent \hrulefill\\
\textbf{Takeaway.} Note that Lemma \ref{lm:asy-order-light} only shows the scaling of $\mathbb{E}[\max_i Y_i]$ in the case where all the moments of $Y_i$ are constants as compared to $N$. Sometimes, we are also interested in the case where $\mathbb{E}[Y_i]=O(f(N))$ and $f(N)\rightarrow\infty$ as $N\rightarrow\infty$. For simplicity, we assume that $Y_i$'s are i.i.d. random variables.



\begin{corollary}\label{coro:taylor-scaling}
Suppose $Y_1,\cdots,Y_N$ are i.i.d. light-tailed random variables with $\mathbb{E}[Y_i]=O(f(N))$ where $f(N)\rightarrow\infty$ as $N\rightarrow\infty$. If $\mathbb{E}[Y_i^n]=O(f^n(N))$ for all $n\in\mathbb{Z}^+$, then $\mathbb{E}[\max_i Y_i]=O(f(N)\log N)$ as $N\rightarrow\infty$.
\end{corollary}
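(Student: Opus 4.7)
The plan is to reduce the corollary to Lemma \ref{lm:asy-order-light} by a simple normalization argument. Define the rescaled random variables $Z_i \triangleq Y_i/f(N)$ for $i \in [N]$. These are still i.i.d., and by the hypothesis $\mathbb{E}[Y_i^n] = O(f^n(N))$, the moments of $Z_i$ satisfy $\mathbb{E}[Z_i^n] = \mathbb{E}[Y_i^n]/f^n(N) = O(1)$ as $N \rightarrow \infty$ for every $n \in \mathbb{Z}^+$. Moreover, light-tailedness of $Y_i$ is inherited by $Z_i$: if the MGF of $Y_i$ is finite on some interval $[0, s^*]$, then $M_{Z_i}(s) = M_{Y_i}(s/f(N))$ is finite on $[0, s^* f(N)]$, which contains any fixed neighborhood of $0$ once $N$ is sufficiently large (since $f(N) \rightarrow \infty$). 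So $Z_i$ is light-tailed as well.

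With these properties verified, Lemma \ref{lm:asy-order-light} applies to $Z_1, \ldots, Z_N$ and yields $\mathbb{E}[\max_i Z_i] = O(\log N)$. Since $\max_i Y_i = f(N) \cdot \max_i Z_i$ by linearity of the maximum under positive scaling, we immediately obtain
\[
\mathbb{E}[\max_i Y_i] = f(N)\,\mathbb{E}[\max_i Z_i] = O(f(N) \log N),
\]
which is the claimed bound.

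The only point requiring care is confirming that the constants hidden in the $O(\cdot)$ notation for $\mathbb{E}[Z_i^n]$ are genuinely independent of $N$, so that the proof of Lemma \ref{lm:asy-order-light}, which depends on cumulants being $O(1)$ uniformly in $N$, can be invoked without modification. This follows directly from the hypothesis: each cumulant $\kappa_n$ of $Z_i$ is a fixed polynomial in $\mathbb{E}[Z_i^1], \ldots, \mathbb{E}[Z_i^n]$, all of which are $O(1)$. Consequently, the exponential decay rate $\gamma$ extracted via Chernoff bound for $Z_i$ is bounded away from $0$, exactly as in the proof of Lemma \ref{lm:asy-order-light}, and the $O(\log N)$ scaling for $\mathbb{E}[\max_i Z_i]$ goes through. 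There is no substantial obstacle here; the corollary is essentially a homogeneity statement, and the normalization reduces it to the already-proved lemma.
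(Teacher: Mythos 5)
Your proof is correct and is essentially the paper's intended argument: the paper omits the proof, noting only that one repeats the proof of Lemma \ref{lm:asy-order-light} with $\epsilon_i=\frac{1}{f(N)}$, and your normalization $Z_i=Y_i/f(N)$ is exactly that substitution, repackaged as a black-box reduction to the lemma applied with a constant Chernoff parameter. The rescaling also cleanly handles the one point of substance — that the MGF domain of $Y_i$ may shrink like $1/f(N)$ while that of $Z_i$ stays uniformly bounded away from $0$ — which you correctly flag when checking that the decay rate $\gamma$ for the $Z_i$ is independent of $N$.
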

\noindent The proof to this corollary is omitted for brevity since it is similar to the proof of Lemma \ref{lm:asy-order-light} except that we explicitly set $\epsilon_i=\frac{1}{f(N)}$.
\vspace{-1mm}

\noindent \hrulefill

\vspace{1mm}
With Lemma \ref{lm:asy-order-light}, we can easily prove the theorem. It is clear that
\[
\begin{split}
\mathbb{E}[\tau(\mathbf{X})]&\le \mathbb{E}[ \max_i \sum_{j} X_{ij}]+ \mathbb{E}[ \max_j \sum_{i} X_{ij}].
\end{split}
\]
By our assumption, $\sum_{j} X_{ij},~i=1,\cdots,N$ is a sequence of independent light-tailed random variables with $\mathbb{E}[(\sum_j X_{ij})^n]=O(1)$ as $N\rightarrow\infty$ for all $n\in\mathbb{Z}^+$. By Lemma \ref{lm:asy-order-light}, we have
$
\mathbb{E}[\max_i \sum_{j} X_{ij}]=O(\log N).
$
Similarly, we have $\mathbb{E}[\max_j \sum_{i} X_{ij}]=O(\log N)$. As a result, we can conclude that $\mathbb{E}[\tau(\mathbf{X})]=O(\log N)$.

To show the tightness, we consider a scenario where $\mathbf{X}$ is a diagonal matrix: $X_{ij}=0$ with probability 1 for $i\ne j$ and $X_{ii}$ has a geometric distribution with mean $\beta$ for all $i\in[N]$. In this case, we have $\mathbb{E}[\tau(\mathbf{X})]= \mathbb{E}[\max_i X_{ii}]$. It was shown in \cite{geo-max} that the expectation of the maximum of $N$ i.i.d. geometric random variables is $\Theta(\log N)$ as $N\rightarrow\infty$. As a result, $\mathbb{E}[\tau(\mathbf{X})]=\Theta(\log N)$ in this scenario.
\subsection{Proof to Theorem \ref{thm:random-delay}}\label{appendix:random}
We only prove the result for the periodic scheduling policy; the randomized policy can be analyzed in exactly the same way. We assume the system is initially empty. Coflow arrivals are indexed by $k=1,2,\cdots$. Suppose the traffic matrix of coflow $k$ is $\mathbf{X}^{(k)}=(X_{ij}^{(k)})$, and denote by $T^{(k)}$  the inter-arrival time between coflow $k$ and coflow $k+1$. Let $W^{(k)}_{ij}$ be the queuing delay experienced by coflow $k$ in queue $(i,j)$, and denote by $U_{ij}^{(k)}$  the processing time for the batch of coflow $k$ in queue $(i,j)$. Under periodic scheduling, each queue gets served every $N$ time slots. As a result, we have
\[
U_{ij}^{(k)}=\Big(Y_{ij}^{(k)}+N(X^{(k)}_{ij}-1)\Big)^+,
\]
where $(x)^+ = \max(0,x)$ and $Y_{ij}^{(k)}$ is the processing time of the first packet of coflow $k$ in queue $(i,j)$. It is clear that $Y_{ij}^{(k)}\le N$ with probability 1 under periodic scheduling. As a result, we have $U_{ij}^{(k)}\le NX_{ij}^{(k)}\triangleq \tilde{U}_{ij}^{(k)}$ with probability 1. Obviously, the delay performance of the original system (with the batch processing time $U_{ij}^{(k)}$) is upper-bounded by the delay performance under a system where the batch processing time is $\tilde{U}_{ij}^{(k)}$. The latter system is referred to as ``System 2", and denote by $\tilde{W}_{ij}^{(k)}$ the queuing delay experienced by coflow $k$ in queue $(i,j)$ in System 2. Now we show that $\tilde{W}_{ij}^{(k)}$'s are \emph{associated random variables} (see Section \ref{sec:math} for the definition). First, some simple associated random variables are identified in our context.

\begin{lemma}\label{lm:simple-associate}
Define $V^{(k)}_{ij}\triangleq NX^{(k)}_{ij}-T^{(k)}$. For any $k\in\mathbb{Z}^+$, random variables $V^{(k)}_{ij},~i,j\in[N]$ are associated.
\end{lemma}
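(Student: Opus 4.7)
The plan is to express each $V^{(k)}_{ij}$ as a coordinate-wise non-decreasing function of a ``base'' collection of random variables that is already known to be associated, and then invoke property (P2). The only subtlety is that $V^{(k)}_{ij}$ is non-decreasing in $X^{(k)}_{ij}$ but non-increasing in $T^{(k)}$, so the obvious base collection $\{X^{(k)}_{ij}\}_{i,j}\cup\{T^{(k)}\}$ will not work; I would instead include the \emph{negated} inter-arrival time $-T^{(k)}$ in the base collection.

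First I would assemble the base collection and verify it is associated. By Assumption (2) in Section~\ref{sec:coflow-abs}, the batch sizes $\{X^{(k)}_{ij}:i,j\in[N]\}$ are mutually independent, hence associated by (P1). The single random variable $-T^{(k)}$ is trivially associated (the one-variable case of (P1)). By Assumption (1), the inter-arrival time $T^{(k)}$ is independent of the batch sizes of coflow $k$, so $-T^{(k)}$ is likewise independent of $\{X^{(k)}_{ij}\}_{i,j}$. Property (P3) then yields that
\[
\mathcal{C}\;\triangleq\;\{X^{(k)}_{ij}:i,j\in[N]\}\;\cup\;\{-T^{(k)}\}
\]
is a collection of associated random variables.

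Next I would observe that for each pair $(i,j)$,
\[
V^{(k)}_{ij}\;=\;N\,X^{(k)}_{ij}+\bigl(-T^{(k)}\bigr)
\]
is coordinate-wise non-decreasing in the entries of $\mathcal{C}$ (it depends only on $X^{(k)}_{ij}$ and $-T^{(k)}$, and with positive coefficients on each). Property (P2) then gives that the family $\{V^{(k)}_{ij}:i,j\in[N]\}$ is associated, which is exactly the claim of the lemma.

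I expect the ``hard'' step is really only conceptual, namely recognizing that $T^{(k)}$ must be negated before it is placed in the base collection; without this, $V^{(k)}_{ij}$ is not a monotone function of the base variables and (P2) does not apply. Once that observation is made, (P1), (P3), and (P2) slot in mechanically and no further computation is needed.
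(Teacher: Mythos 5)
Your proof is correct, and it takes a genuinely different route from the paper's. The paper verifies the definition of association directly by conditioning on $T^{(k)}=t$: given $t$, the $V^{(k)}_{ij}$'s are independent, hence associated, so the conditional covariance $\text{Cov}\big(f(\mathbf{V}^{(k)}),g(\mathbf{V}^{(k)})\,\big|\,T^{(k)}=t\big)$ is nonnegative, and the paper then integrates over $t$. Strictly speaking, the paper's identity $\text{Cov}(f,g)=\mathbb{E}\big[\text{Cov}(f,g\mid T^{(k)})\big]$ omits the second term $\text{Cov}\big(\mathbb{E}[f\mid T^{(k)}],\mathbb{E}[g\mid T^{(k)}]\big)$ of the law of total covariance; that term happens to be nonnegative here (both conditional expectations are monotone in $t$, and two monotone functions of a single random variable are positively correlated), so the paper's conclusion stands, but the step as written is incomplete. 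Your argument sidesteps this entirely: you build the associated base collection $\{X^{(k)}_{ij}\}\cup\{-T^{(k)}\}$ via (P1), the fact that a single random variable is associated, and (P3), and then apply (P2) to the coordinate-wise non-decreasing maps $V^{(k)}_{ij}=NX^{(k)}_{ij}+(-T^{(k)})$. The negation of $T^{(k)}$ is exactly the right observation to make (P2) applicable. What your approach buys is a purely structural proof using only the closure properties already listed in \cref{sec:math}, with no covariance computation; what the paper's approach buys is that it works even if one does not notice the negation trick, at the cost of the conditioning argument above. Both establish the lemma.
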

\begin{proof}
Given $T^{(k)}=t$, it is clear that $V^{(k)}_{ij}$'s are independent and thus associated (by (P1) of associated random variables). As a result, by the definition of associated random variables, for any non-decreasing functions $f$ and $g$
\[
\text{Cov}\Big(f(\mathbf{V}^{(k)}),g(\mathbf{V}^{(k)})\Big|T^{(k)}=t\Big)\ge 0.
\]
As a result, it follows that
\[
\begin{split}
&\text{Cov}\Big(f(\mathbf{V}^{(k)}),g(\mathbf{V}^{(k)})\Big)\\
=&\mathbb{E}\Big[\text{Cov}\Big(f(\mathbf{V}^{(k)}),g(\mathbf{V}^{(k)})\Big|T^{(k)}\Big)\Big]\\
=&\int_{t=0}^{\infty} \text{Cov}\Big(f(\mathbf{V}^{(k)}),g(\mathbf{V}^{(k)})\Big|T^{(k)}=t\Big) f_{T^{(k)}}(t)dt \ge 0,
\end{split}
\]
where $f_{T^{(k)}}(t)$ is the PDF for $T^{(k)}$. Therefore, random variables $V^{(k)}_{ij},~i,j\in[N]$ are associated, and this conclusion holds for all $k\in\mathbb{Z}^+$.
\end{proof}
With Lemma \ref{lm:simple-associate}, we can show that random variables $\tilde{W}_{ij}^{(k)}$'s are associated.

\begin{lemma}\label{lm:waiting-associate}
For all $k\in\mathbb{Z}^+$, random variables $\tilde{W}_{ij}^{(k)},~i,j\in[N]$ are associated.
\end{lemma}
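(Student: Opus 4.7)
The natural plan is to combine the Lindley recursion in System~2 with the three association properties (P1)--(P3) recalled in Section~\ref{sec:math}.

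First I would write out the per-queue dynamics. Because in System~2 queue $(i,j)$ receives a batch of processing requirement $\tilde{U}_{ij}^{(k)} = NX_{ij}^{(k)}$ on each arrival and the inter-arrival time between the $k$-th and $(k+1)$-th coflow is $T^{(k)}$, the waiting time satisfies the standard Lindley recursion
\[
\tilde{W}_{ij}^{(k+1)} = \bigl(\tilde{W}_{ij}^{(k)} + NX_{ij}^{(k)} - T^{(k)}\bigr)^+ = \bigl(\tilde{W}_{ij}^{(k)} + V_{ij}^{(k)}\bigr)^+,
\]
with $\tilde{W}_{ij}^{(1)}=0$. Iterating, one obtains the classical closed form $\tilde{W}_{ij}^{(k)} = \max_{0 \le m \le k-1} \sum_{\ell=k-m}^{k-1} V_{ij}^{(\ell)}$ (with the convention that the empty sum is $0$). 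The key observation is that this expression is a \emph{coordinate-wise non-decreasing} function of the variables $\{V_{ij}^{(\ell)} : \ell = 1,\dots,k-1\}$: it is a maximum of sums, and sums and the $\max$ operator preserve monotonicity.

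Next I would assemble the joint collection $\mathcal{V} = \{V_{ij}^{(\ell)} : i,j \in [N],\ \ell = 1,\dots,k-1\}$ and argue that all of its members are associated. For each fixed $\ell$, Lemma~\ref{lm:simple-associate} gives that $\{V_{ij}^{(\ell)} : i,j \in [N]\}$ is a set of associated random variables. By the independence of arrival times and batch sizes across different coflows (Assumption~(1) in Section~\ref{sec:coflow-abs}), the blocks corresponding to different $\ell$ are mutually independent. Applying property (P3) of associated random variables --- if two independent sets of associated variables are combined, the union is still associated --- inductively over $\ell = 1,\dots,k-1$ shows that the entire collection $\mathcal{V}$ is associated.

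Finally, I would invoke property (P2): non-decreasing functions of associated random variables are associated. Since each $\tilde{W}_{ij}^{(k)}$, for $i,j \in [N]$, is a non-decreasing function of $\mathcal{V}$ by the explicit formula above, the family $\{\tilde{W}_{ij}^{(k)} : i,j \in [N]\}$ is associated, which is the claim.

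The main obstacle I anticipate is a notational rather than a conceptual one: one must be careful that the monotonicity argument in (P2) really applies to the \emph{full} collection $\mathcal{V}$ (including variables indexed by $(i',j') \neq (i,j)$), which is trivially true because the Lindley formula for $\tilde{W}_{ij}^{(k)}$ simply does not depend on those extra coordinates and constant-in-some-coordinates functions are still non-decreasing in them. Once this is written cleanly, the proof reduces to a one-line application of (P2), (P3) and Lemma~\ref{lm:simple-associate}.
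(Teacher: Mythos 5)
Your proposal is correct and follows essentially the same route as the paper: both rest on Lemma~\ref{lm:simple-associate} together with properties (P2) and (P3) of associated random variables. The only difference is presentational---the paper inducts on $k$ directly through the recursion $\tilde{W}_{ij}^{(k+1)}=\bigl(\tilde{W}_{ij}^{(k)}+NX_{ij}^{(k)}-T^{(k)}\bigr)^+$, whereas you unroll the Lindley recursion into a maximum of partial sums and apply (P2) once to the full associated collection $\{V_{ij}^{(\ell)}\}$.
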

\begin{proof}
We prove by induction on $k$.

\textbf{Basis Step.} When $k=1$, random variables $\tilde{W}_{ij}^{(1)},~i,j\in[N]$ are clearly associated since the system is initially empty and $\tilde{W}_{ij}^{(1)}=0$ with probability 1 for all $i,j\in[N]$.

\textbf{Inductive Step.} For some $k\ge 1$, assume that random variables $\tilde{W}_{ij}^{(k)},~i,j\in[N]$ are associated. Now we prove that random variables $\tilde{W}_{ij}^{(k+1)},~i,j\in[N]$ are also associated. It is clear that
\[
\begin{split}
\tilde{W}_{ij}^{(k+1)} &= (\tilde{W}_{ij}^{(k)} + \tilde{U}_{ij}^{(k)} - T^{(k)})^+\\
& = (\tilde{W}_{ij}^{(k)} + NX^{(k)}_{ij} - T^{(k)})^+.
\end{split}
\]
We identify three sets of associated random variables:
\begin{itemize}
\item[(1)] Random variables $NX^{(k)}_{ij} - T^{(k)},~i,j\in[N]$ are associated due to Lemma \ref{lm:simple-associate};

\item[(2)] The union of $\{\tilde{W}_{ij}^{(k)},~i,j\in[N]\}$ and $\{NX^{(k)}_{ij} - T^{(k)},~i,j\in[N]\}$ is a set of associated random variables since they are two sets of independent associated random variables (by (P3) of associated random variables);

\item[(3)] Random variables $\tilde{W}_{ij}^{(k+1)},~i,j\in[N]$ are associated since they are non-decreasing functions of the set of associated random variables identified in (2) (by (P2) of associated random variables).
\end{itemize}
Now we have completed the induction.
\end{proof}

Since the above lemma holds for all $k\in\mathbb{Z}^+$, we drop the superscript $k$ for convenience. Define $\tilde{Z}_{ij} = \tilde{W}_{ij} 1_{\{X_{ij}\ge 1\}}+ NX_{ij}$, i.e., the total time that coflow $k$ needs to wait for until its batch in queue $(i,j)$ is cleared. Note that the $1_{\{X_{ij}\ge 1\}}$ term is due to the fact that if a coflow contains no packets in queue $(i,j)$, it does not need to experience the queuing delay $\tilde{W}_{ij}$. By Lemma \ref{lm:waiting-associate}, $\tilde{W}_{ij}$'s are associated;  $X_{ij}$'s are also associated due to the independence (P1); the union of $\tilde{W}_{ij}$'s and $X_{ij}$'s is also a set of associated random variables due to the independence of $X_{ij}$'s and $\tilde{W}_{ij}$'s (P3). Therefore, we can conclude that $\tilde{Z}_{ij}$'s are associated since they are non-decreasing functions of $\tilde{W}_{ij}$'s and $X_{ij}$'s (P2).
Note that $\max_{ij} \tilde{Z}_{ij}$ is the coflow-level delay experienced by a coflow in System 2. By (P4) of associated random variables, we have $\mathbb{E}[\max_{ij}\tilde{Z}_{ij}]\le \mathbb{E}[\max_{ij}\tilde{Z}'_{ij}]$ where $\tilde{Z}'_{ij}$'s are \emph{independent} random variables identically distributed as $\tilde{Z}_{ij}$'s. In order to evaluate the value of $\mathbb{E}[\max_{ij}\tilde{Z}'_{ij}]$, we identify some important properties of the distribution of $\tilde{Z}_{ij}$.

First, we claim that $\tilde{Z}_{ij}$'s are light-tailed. Indeed, since $X_{ij}$ is light-tailed, the service time $\tilde{U}_{ij}=NX_{ij}$ is also light-tailed. According to \cite{gg1-tail} (Theorem 11, p. 129), if the service time is light-tailed, the tail of the queuing time in an M/G/1 queue decreases exponentially. Hence, $\tilde{W}_{ij}$'s are light-tailed, which implies that $\tilde{Z}_{ij}$'s also have light-tailed distributions.

Next, we evaluate the value of $\mathbb{E}[\tilde{Z}_{ij}]$ and higher moments of $\tilde{Z}_{ij}$. It is clear that queue $(i,j)$ is a slotted M/G/1 queue with arrival rate $\lambda$ and service time $\tilde{U}_{ij} =NX_{ij}$. Note that $\mathbb{E}[\tilde{U}_{ij}] = \beta$ and $\mathbb{E}[\tilde{U}^2_{ij}]=(N\sigma^2+\beta^2)$. According to the Pollaczek-Khinchin formula for slotted M/G/1 queues, we have
\[
\begin{split}
\mathbb{E}[\tilde{W}_{ij}] &= \frac{\lambda \mathbb{E}[\tilde{U}_{ij}^2]}{2(1-\lambda\mathbb{E}[\tilde{U}_{ij}])}+\frac{1}{2}\\
& =\frac{\lambda (N\sigma^2+\beta^2)}{2(1-\rho)}+\frac{1}{2}.
\end{split}
\]
As a result, we have
\[
\begin{split}
\mathbb{E}[\tilde{Z}_{ij}] &= \mathbb{P}(X_{ij}\ge 1)\mathbb{E}[\tilde{W}_{ij}]+\mathbb{E}[\tilde{U}_{ij}]\\
& \le \frac{\beta}{N}\Big[\frac{\lambda (N\sigma^2+\beta^2)}{2(1-\rho)}+\frac{1}{2}\Big]+\beta\\
&= O(1),
\end{split}
\]
where the inequality utilizes the Markov inequality, i.e., $\mathbb{P}(X_{ij}\ge 1)\le \mathbb{E}[X_{ij}]=\frac{\beta}{N}$.  Furthermore, noting that $\mathbb{E}[\tilde{U}_{ij}^n]=N^n\mathbb{E}[X^{n}_{ij}]=O(N^{n-1})$ for all $n\in\mathbb{Z}^+$ (by our assumptions on $X_{ij}$'s), we can similarly obtain $\mathbb{E}[\tilde{Z}_{ij}^n]=O(N^{n-1})$ according to the moment bounds in \cite{gg1-tail}.

Recall that $\tilde{Z}'_{ij}$'s are independent random variables identically distributed as $\tilde{Z}_{ij}$'s.  Define $\tilde{Z}'_i\triangleq \max_j \tilde{Z}'_{ij}$ for all $i\in[N]$. It follows from the above analysis that $\tilde{Z}'_i,~i=1,\cdots,N$ are a sequence of i.i.d. light-tailed random variables with $\mathbb{E}[\tilde{Z}'_i]\le \sum_j\mathbb{E}[\tilde{Z}'_{ij}]=O(N)$ and $\mathbb{E}[(\tilde{Z}'_i)^n]\le O\big(\sum_j \mathbb{E}[(\tilde{Z}'_{ij})^n]\big)=O(N^n)$ for all $n\in\mathbb{Z}^+$. By Corollary \ref{coro:taylor-scaling} (see Appendix \ref{appendix:ct-light}), we have $\mathbb{E}[\max_i \tilde{Z}'_i]=O(N\log N)$ as $N\rightarrow\infty$. Then it follows that
\[
\mathbb{E}[\max_{i,j}\tilde{Z}_{ij}]\le \mathbb{E}[\max_{i,j}\tilde{Z}'_{ij}]=\mathbb{E}[\max_i \tilde{Z}'_{i}] =O(N\log N).
\]
Therefore, the average coflow-level delay achieved by the periodic scheduling policy is $O(N\log N)$.

\subsection{Flow size distribution in an overflow frame}\label{app:overflow-size}
In this appendix, we introduce a lemma about the flow size distribution of a coflow that arrives in an \textbf{overflow} frame.
\begin{lemma}\label{lm:non-conforming-batch}
Let $(\tilde{X}_{ij})$ be the traffic of an arbitrary coflow in an \textbf{overflow} frame. Then $\sum_j \tilde{X}_{ij}$ is stochastically dominated by $\sum_j X_{ij}+T$ for all $i\in[N]$, where $(X_{ij})$ is the traffic MATRIX of an arbitrary coflow in an \textbf{arbitrary} frame.
\end{lemma}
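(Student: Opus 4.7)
My plan is to directly compare the conditional distribution of $\sum_j X_{ij}$ under the overflow event against the unconditional distribution shifted by $T$. Fix an arbitrary $i\in[N]$ and let $R=\sum_j X_{ij}$ denote the row-$i$ sum of a generic coflow in an arbitrary frame, while $\tilde R=\sum_j \tilde X_{ij}$ denotes the row-$i$ sum of a coflow inside an overflow frame. Letting $\mathcal{O}$ denote the overflow event $\{\tau(\mathbf{L})\ge T\}$ for the frame's aggregate traffic matrix $\mathbf{L}$, and invoking the i.i.d.\ structure of coflows within a frame, $\tilde R$ is distributed as $R$ conditioned on $\mathcal{O}$. Since stochastic dominance requires $\mathbb{P}[\tilde R\ge x]\le \mathbb{P}[R+T\ge x]$ for every $x\ge 0$, and the inequality is trivial when $x\le T$, only the range $x>T$ needs work.

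For $x>T$, I would exploit the crucial inclusion $\{R\ge x\}\subseteq\{R\ge T\}\subseteq\mathcal{O}$: whenever a single coflow contributes at least $T$ to row $i$, the $i$-th row sum of the aggregate $\mathbf{L}$ is already at least $T$, forcing $\mathcal{O}$. This yields
\[
\mathbb{P}[\tilde R\ge x]\;=\;\mathbb{P}[R\ge x\mid \mathcal{O}]\;=\;\frac{\mathbb{P}[R\ge x]}{\mathbb{P}[\mathcal{O}]},
\]
so the goal reduces to showing $\mathbb{P}[R\ge x]\le \mathbb{P}[\mathcal{O}]\cdot\mathbb{P}[R\ge x-T]$ for all $x>T$. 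I would then combine two ingredients: (i) a Chernoff-style upper bound $\mathbb{P}[R\ge x]\le C e^{-\gamma_R x}$ coming from the light-tailed assumption on $X_{ij}$ (analogous to the one derived in the proof of Lemma \ref{lm:exp-decay}), and (ii) the lower bound $\mathbb{P}[\mathcal{O}]\ge \mathbb{P}[N(T)\ge T/\beta]$ already furnished by Claim \ref{lm:overflow-lower}, followed by a Poisson-tail estimate, to match exponential decay rates and absorb the factor $1/\mathbb{P}[\mathcal{O}]$ into the $T$-shift on $R$.

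The chief obstacle is making that rate-matching step quantitative: the ratio $\mathbb{P}[R\ge x]/\mathbb{P}[R\ge x-T]$ must be shown to stay below $\mathbb{P}[\mathcal{O}]$ uniformly in $x>T$, which hinges on the rate at which $R$'s tail decays being compatible with the Poisson rate governing the overflow event. Once the row case is settled, the same argument is repeated symmetrically for the column sums $\sum_i X_{ij}$; since $\mathcal{O}$ is the union of row- and column-induced events inside the definition of $\tau(\mathbf{L})$, a routine union-bound adaptation handles the general overflow event without altering the basic structure.
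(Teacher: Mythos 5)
Your setup is right and you have isolated the correct key fact, namely the inclusion $\{R\ge x\}\subseteq\{R\ge T\}\subseteq\mathcal{O}$ for $x\ge T$, which gives $\mathbb{P}[\tilde R\ge x]=\mathbb{P}[R\ge x]\slash\mathbb{P}[\mathcal{O}]$. But the step you yourself flag as the ``chief obstacle'' is a genuine gap, and the tools you propose cannot close it. You want $\mathbb{P}[R\ge x]\le \mathbb{P}[\mathcal{O}]\,\mathbb{P}[R\ge x-T]$ and plan to get it from a Chernoff upper bound on $\mathbb{P}[R\ge x]$ together with the lower bound $\mathbb{P}[\mathcal{O}]\ge\mathbb{P}[N(T)\ge T\slash\beta]$. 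This fails for two reasons. First, you have no lower bound on $\mathbb{P}[R\ge x-T]$: light-tailedness only upper-bounds the tail, and the true tail of $R$ may decay super-exponentially (e.g., bounded or Poisson-like flow sizes), so the ratio $\mathbb{P}[R\ge x]\slash\mathbb{P}[R\ge x-T]$ is not controlled by any single Chernoff exponent. Second, the Poisson lower bound on $\mathbb{P}[\mathcal{O}]$ carries a decay rate determined by $\lambda$ and $\beta$ that has no reason to match the tail rate of a single coflow's row sum, so the rate-matching you hope for does not go through in general.

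The repair is short and stays entirely within the inclusions you already wrote down. Since $\{R\ge T\}\subseteq\mathcal{O}$, you have $\mathbb{P}[\mathcal{O}]\ge\mathbb{P}[R\ge T]$, hence for $x\ge T$
\[
\mathbb{P}[R\ge x\mid\mathcal{O}]=\frac{\mathbb{P}[R\ge x]}{\mathbb{P}[\mathcal{O}]}\le\frac{\mathbb{P}[R\ge x]}{\mathbb{P}[R\ge T]}=\mathbb{P}[R\ge x\mid R\ge T].
\]
This replaces conditioning on the overflow event by conditioning on the single-row event $\{R\ge T\}$, and the $T$-shift then comes from the residual-life bound for light-tailed variables, $\mathbb{P}[R\ge x\mid R\ge T]\le\mathbb{P}[R\ge x-T]$ for large $T$, which is exactly what the paper invokes from \cite{gallager-stochastic} in \eqref{eq:gallager}. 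No comparison with the Poisson overflow probability is needed anywhere. Also, the closing ``union bound over rows and columns'' you mention is unnecessary: the argument only uses $\{R\ge T\}\subseteq\mathcal{O}$ and $\mathbb{P}[\mathcal{O}]\ge\mathbb{P}[R\ge T]$, both of which hold regardless of which row or column events make up $\mathcal{O}$.
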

\begin{proof}
Let $\mathbf{L}$ be the total traffic that arrives in an arbitrary frame, and denote by $\mathbf{X}=(X_{ij})$ the traffic associated with an arbitrary coflow in this frame. Clearly, an overflow occurs in this frame if and only if $\tau(\mathbf{L})\ge T$. As a result, the flow size distribution for an arbitrary coflow in this overflow frame is
\[
\mathbb{P}[X_{ij}\ge m|\tau(\mathbf{L})\ge T]\triangleq\mathbb{P}[\tilde{X}_{ij}\ge m],
\]
where $(X_{ij})$ is the traffic of an arbitrary coflow in an arbitrary frame.

In order to show that $\sum_{j} \tilde{X}_{ij}$ is stochastically dominated by $\sum_j X_{ij}+T$ for all $i\in[N]$, it suffices to show that for all $i\in[N]$
\[
\small
\mathbb{P}[\tilde{X}_{ij}\ge m]\le \mathbb{P}[\sum_j X_{ij}+T\ge m].
\]
To show this inequality, we first prove that for all $i\in[N]$
\begin{equation}\label{eq:batch-no}
\small
\mathbb{P}[\sum_j X_{ij}\ge m|\tau(\mathbf{L})\ge T] \le \mathbb{P}[\sum_j X_{ij}\ge m|\sum_j X_{ij}\ge T].
\end{equation}
If $m< T$, we note that the right-hand side of \eqref{eq:batch-no} equals to 1, so inequality \eqref{eq:batch-no} naturally holds. If $m\ge T$, we notice that
\[
\small
\sum_j X_{ij}\ge m\Rightarrow \tau(\mathbf{L})\ge \sum_j X_{ij}\ge m\ge T,
\]
which implies that
\begin{equation}\label{eq:11}
\small
\mathbb{P}[\sum_j X_{ij}\ge m,\tau(\mathbf{L})\ge T] = \mathbb{P}[\sum_j X_{ij}\ge m].
\end{equation}
Similarly, we can show that when $m\ge T$
\begin{equation}\label{eq:12}
\small
\mathbb{P}[\sum_j X_{ij}\ge m, \sum_j X_{ij}\ge T]=\mathbb{P}[\sum_j X_{ij}\ge m].
\end{equation}
Comparing \eqref{eq:11} with \eqref{eq:12}, we have when $m\ge T$
\begin{equation}\label{eq:13}
\small
\mathbb{P}[\sum_j X_{ij}\ge m, \tau(\mathbf{L})\ge T]=\mathbb{P}[\sum_j X_{ij}\ge m, \sum_j X_{ij}\ge T].
\end{equation}
Now we rewrite $\small \mathbb{P}[\sum_j X_{ij}\ge m|\tau(\mathbf{L})\ge T]$ according to the rule of conditional probability:
\[
\small
\begin{split}
\mathbb{P}[\sum_j X_{ij}\ge m|\tau(\mathbf{L})\ge T]&=\frac{\mathbb{P}[\sum_j X_{ij}\ge m,\tau(\mathbf{L})\ge T]}{\mathbb{P}[\tau(\mathbf{L})\ge T]}\\
&\le\frac{\mathbb{P}[\sum_j X_{ij}\ge m, \sum_j X_{ij}\ge T]}{\mathbb{P}[\sum_j X_{ij}\ge T]}\\
&=\mathbb{P}[\sum_j X_{ij}\ge m|\sum_j X_{ij}\ge T],
\end{split}
\]
where the inequality is due to \eqref{eq:13} and the fact that $\mathbb{P}[\tau(\mathbf{L})\ge T]\ge \mathbb{P}[\sum_j X_{ij}\ge T]$. This completes the proof to \eqref{eq:batch-no}.

Since $\sum_j X_{ij}$ is light-tailed, we know from \cite{gallager-stochastic} that as $T\rightarrow\infty$,
\begin{equation}\label{eq:gallager}
\small
\mathbb{P}[\sum_j X_{ij}\ge m|\sum_j X_{ij}\ge T]\le \mathbb{P}[\sum_j X_{ij} \ge m-T].
\end{equation}
Taking \eqref{eq:gallager} into \eqref{eq:batch-no}, we obtain
\begin{equation}\label{eq:eq-haha}
\small
\mathbb{P}[\sum_j X_{ij}\ge m|\tau(\mathbf{L})\ge T]\le \mathbb{P}[\sum_j X_{ij} \ge m-T].
\end{equation}
Note that the left-hand side of \eqref{eq:eq-haha} equals to $\mathbb{P}[\sum_j \tilde{X}_{ij}\ge m]$ by our definition of $\tilde{X}_{ij}$, and the right-hand side of \eqref{eq:eq-haha} can be rewritten as $\mathbb{P}[\sum_j X_{ij}+T\ge m]$. As a result, we have for all $i\in[N]$
\[
\mathbb{P}[\sum_j \tilde{X}_{ij}\ge m]\le \mathbb{P}[\sum_j X_{ij}+T\ge m],
\]
i.e.,  $\sum_j \tilde{X}_{ij}$ is stochastically dominated by $\sum_j X_{ij}+T$.
\end{proof}
\end{document}